\definecolor{darkblue}{rgb}{0,0,0.4}
\newcommand {\mm}[1] {\ifmmode{#1}\else{\mbox{\(#1\)}}\fi}
\newcommand{\dime}[1]       {\mm{\rm dim\,}{#1}}
\newcommand{\Lcal}{\mathcal{L}}
\newcommand{\tLcal}{\tilde{\mathcal{L}}}
\newcommand{\tW}{\tilde{W}}
\newcommand{\Rspace}        {\mm{{\mathbb R}}}
\newcommand{\Sspace}        {\mm{{\mathbb S}}}
\newcommand{\Hom}        {\mm{\mathrm{Hom}}}
\newcommand{\pinv}{{+}}
\newcommand{\Exp}{\mathbb{E}}
\newcommand{\nnz}{\mm{nnz}}
\newtheorem{theorem}{Theorem}[section]
\newtheorem{proposition}{Proposition}[section]
\newtheorem{corollary}{Corollary}[section]
\newcommand{\denselist}{\vspace{-5pt} \itemsep -2pt\parsep=-1pt\partopsep -2pt}
\newcommand{\para}[1]        {\noindent{\textbf{#1}}}
\DeclareMathOperator{\sgn}{sgn}
\title{Spectral Sparsification of Simplicial Complexes \\
    for Clustering and Label Propagation}
\author{Braxton Osting\thanks{E-mail:~osting@math.utah.edu.} \\ University of Utah 
\and Sourabh Palande\thanks{E-mail: sourabh@sci.utah.edu.}\\University of Utah
\and Bei Wang\thanks{E-mail: beiwang@sci.utah.edu.} \\ University of Utah}
\date{}
\begin{document}

\maketitle

	
\begin{abstract}
    As a generalization of the use of graphs to describe pairwise interactions, simplicial complexes can be used to model higher-order interactions between three or more objects in complex systems.
    There has been a recent surge in activity for the development of data analysis methods applicable to simplicial complexes, including techniques  based on computational topology, higher-order random processes, generalized Cheeger inequalities, isoperimetric inequalities and spectral methods.
    In particular, spectral learning methods (e.g. label propagation and clustering) that directly operate on simplicial complexes represent a new direction for analyzing such complex datasets.
    
    To apply spectral learning methods to massive datasets modeled as simplicial complexes, we develop a method for sparsifying simplicial complexes that preserves the spectrum of the associated Laplacian matrices. 
    We show that the theory of Spielman and Srivastava for the sparsification of graphs extends to simplicial complexes via the up Laplacian. 
    In particular, we introduce a generalized effective resistance for simplices, provide an algorithm for sparsifying simplicial complexes at a fixed dimension, and give a specific version of the generalized Cheeger inequality for weighted simplicial complexes. 
    Finally, we introduce higher-order generalizations of spectral clustering and label propagation for simplicial complexes and demonstrate via experiments the utility of the proposed spectral sparsification method for these applications.
\end{abstract}


\section{Introduction}

    Understanding massive systems with complex interactions and multi-scale dynamics is  important  in a variety of social, biological and technological settings.
    A commonly-used approach to understanding such a system is to represent it as a graph where vertices represent objects and (weighted) edges represent \emph{pairwise interactions} between the objects.
    A large arsenal of methods has been developed to analyze properties of graphs, which can then be combined  with domain-specific knowledge to infer properties of the system being studied.
    These tools include graph partitioning and clustering~\cite{OstingWhiteOudet2014,GennipGuillenOsting2014,Luxburg2007}, random processes on graphs~\cite{Gleich2015}, graph distances, various measures of graph connectivity~\cite{OstingBruneOsher2014}, combinatorial graph invariants~\cite{Diestel2000} and spectral graph theory~\cite{Chung1997}.
    In particular, spectral methods for graph-based learning have had great success due to their efficiency and good theoretical guarantees for applications ranging from image segmentation~\cite{MahoneyOrecchiaVishnoi2012} to community detection~\cite{AndersenLang2006}.
    For example, the spectral clustering method (see, e.g.,~\cite{AndersenChungLang2006, SpielmanTeng2013}) is a graph-based learning method  used for the unsupervised clustering task and label propagation~\cite{SzummerJaakkola2002,ZhuGhahramaniLafferty2003} is a graph-based learning method for semi-supervised regression.

    \para{Simplicial complexes and data analysis.} 
        While graphs have been used with great success to describe pairwise interactions between objects in datasets, they fail to capture \emph{higher-order interactions} that occur between three or more objects.
        Higher-order interactions in complex datasets can be modeled using \emph{simplicial complexes} \cite{GradyPolimeni2010,Munkres1984}.
        There has been a recent surge in activity to develop machine learning methods for data represented by simplicial complexes, including methods based on computational topology~\cite{Carlsson2009,EdelsbrunnerHarer2008,Ghrist2008,GradyPolimeni2010}, higher-order random processes~\cite{BensonGleichLeskovec2015,GleichLimYu2015}, generalized Cheeger inequalities~\cite{GundertSzedlak2015,SteenbergenKlivansMukherjee2014}, isoperimetric inequalities~\cite{ParzanchevskiRosenthalTessler2016}, high-dimensional expanders~\cite{DotterrerKahle2012, Lubotzky2014, ParzanchevskiRosenthal2016} and spectral methods~\cite{HorakJost2013}.
        In particular, topological data analysis methods using simplicial complexes as the underlying combinatorial structures have been successfully employed for diverse applications~\cite{SilvaGhrist2007,JiangLimYao2011,LeePedersenMumford2003,NicolauLevineCarlsson2011,Yuan2014,PereaHarer2015,RenZhaoRamanathan2013,WangSummaPascucci2011}.
    
        Learning (indirectly or directly) based on simplicial complexes represents a new direction recently emerging from the confluence of computational topology and machine learning.
        This is ongoing work; while topological features derived from simplicial complexes, used as input to machine learning algorithms, have been shown to increase the predictive power compared to graph-theoretic features~\cite{BendichGasparovicHarer2015, WongPalandeWang2016}, there is still interest in developing learning algorithms that directly operate on simplicial complexes. 
        For example, researchers have begun to develop mathematical intuition behind higher-dimensional notions of spectral clustering and label propagation~\cite{MukherjeeSteenbergen2016,SteenbergenKlivansMukherjee2014,Luxburg2007}.
    
    \para{Sparsification of graphs and simplicial complexes.}
        For unstructured graphs representing massive datasets, the computational costs associated with na\"ive implementations of many graph-based algorithms is prohibitive.
        In this scenario, it is useful to approximate the original graph with one having fewer edges or vertices while preserving certain properties of interest, known as \emph{graph sparsification}.
        A variety of graph sparsification methods have been developed that allow for both efficient storage and computation \cite{BenczurKarger1996,SpielmanSrivastava2011,SpielmanTeng2011}.
        In particular, in seminal work, Spielman and Srivastava developed a method for sparsifying graphs that approximately preserves the spectrum of the graph Laplacian~\cite{SpielmanSrivastava2011}.
        It is well-known from spectral graph theory that the spectrum of the graph Laplacian bounds a variety of properties of interest including the size of cuts (i.e.~bottlenecks), clusters (i.e.~communities), distances, various random processes (i.e.~PageRank) and combinatorial properties (e.g.~coloring, spanning trees, etc.).
        It follows that this method~\cite{BatsonSpielmanSrivastava2013} can be used to produces a sparsified graph that contains a great deal of information about the original graph and hence, in the graph-based machine learning setting, about the underlying dataset.
        
        Analogously, computational methods that operate on simplicial complexes are severely limited by the computational costs associated with massive datasets. While geometric complexes (embedded in Euclidean space) tend to be naturally sparse, abstract simplicial complexes coming from data analysis can be dense and do not have natural embeddings in Euclidean space.
        For example, a dense simplicial complex is obtained when representing funnctional brain netwroks using simplicial complexes (e.g.,~\cite{CassidyRaeSolo2015,LeeChungKang2011,LeeKangChung2012}). 
        Here, a brain network is mapped to a point cloud in a metric space, where network nodes map to points, pairwise associations between nodes map to distances between pairs of points, and higher order information is mapped to  higher dimensional simplices~\cite{AndersonAndersonPalande2018,WongPalandeWang2016}.
        Another motivation behind studying sparsification of simplicial complexes is the fact that high-order tensors (multidimensional arrays) can be represented by  simplicial complexes and vice versa. Just as spectral graph sparsifiers are useful in matrix decompositions and linear system solvers, one can expect simplicial complex sparsifiers to be useful in tensor decompositions and multi-linear system solvers.
        
        Several approaches have been recently proposed to sparsify simplicial complexes.
        One class of methods, referred to as \emph{homological sparsification}, involves constructing a sparse simplicial complex that approximates persistence homology~\cite{BotnanSpreemann2015, BuchetChazalOudot2015, CavannaJahanseirSheehy2015, ChoudharyKerberRaghvendra2016, DeyFanWang2014, DeyFanWang2015, KerberSharathkumar2013,Sheehy2013, TauszCarlsson2011}.
        Persistence homology~\cite{EdelsbrunnerLetscherZomorodian2002} turns the algebraic concept of homology into a multi-scale notion.
        It typically operates on a sequence of simplicial complexes (referred to as a \emph{filtration}), constructs a series of homology groups and measures their relevant scales in the filtration.
        Common simplicial filtrations arise from \v{C}ech or Vietoris-Rips complexes, and most of the homological sparsification techniques produce sparsified  complexes that give guaranteed approximations to the persistent homology of the unsparsified filtration.
        
        The sparsification processes involve either the removal or subsampling of vertices, or edge contractions from the sparse filtration.
        It is also possible to sparsify simplicial complexes using another class of methods called sketching, particularly, those applied to tensors.
        Tensor decomposition methods have found many applications in machine learning~\cite{KoldaBader2009}, including recent advancements in tensor sparsification~\cite{WangTungSmola2015,HauptLiWoodruff2017,NguyenDrineasTran2015,SongWoodruffZhong2017} using sampling methods from randomized linear algebra.
        
        Since many learning methods based on simplicial complexes rely --- either explicitly or implicitly --- on the spectral theory for higher-order Laplacians, it is desirable to develop methods for sparsifying simplicial complexes that approximately preserves the spectrum of higher-order Laplacians.
     
    \para{Contributions.}
        In this paper, motivated by learning based on simplicial complexes, we develop computational methods for the spectral sparsification of simplicial complexes.
        In particular:
        \begin{itemize}\denselist
        	\item We introduce a \emph{generalized effective resistance} of simplices by extending the notion of \emph{effective resistance} of edges (e.g.~\cite{ChandraRaghavanRuzzo1996, DoyleSnell1984,GhoshBoydSaberi2008}); see Section~\ref{sec:algorithm}.  
        	\item We extend the methods and analysis of Spielman and Srivastava~\cite{SpielmanSrivastava2011} for sparsifying graphs to the context  of simiplicial complexes at a fixed dimension. We prove that the spectrum of the \emph{up Laplacian}  is approximately preserved under sparsification in the sense that the spectrum of the up Laplacian for the sparsified simplicial complex is controlled by the spectrum of the up Laplacian for the original simplicial complex; see Theorem \ref{thm:main}.  
        	\item We generalize the Cheeger constant of Gundert and Szedl\'ak for \emph{unweighted} simplicial complexes \cite{GundertSzedlak2015} to  \emph{weighted} simplicial complexes and verify that the Cheeger inequality involving  the first non-trivial eigenvalue of the \emph{weighted} up Laplacian holds in the sparsfied setting; see Proposition~\ref{prop:newCheeger}.  
        	\item Our theoretical results are supported by substantial numerical experiments. By extending spectral learning algorithms such as spectral clustering and label propagation to simplicial complexes,
        	we demonstrate that preserving the structure of the up Laplacian via sparsification also preserves the results of these algorithms (Section~\ref{sec:experiments}). 
        	These applications exemplify the utility of our spectral sparsification methods. 
        \end{itemize}
        We proceed by reviewing  background results and introducing notation in Section~\ref{sec:background} that gives a brief description of relevant algebraic concepts, effective resistance, and spectral sparsification of graphs.
        The theory and algorithm for sparsifying simplicial complexes are presented in Section~\ref{sec:algorithm}. We state the implications of the algorithm for a generalized Cheeger cut for the simplicial complex in Section~\ref{sec:Cheeger}.
        We showcase experimental results validating our algorithms in Section~\ref{sec:experiments} and conclude with a discussion and some open questions in Section~\ref{sec:discussion}.

	
\section{Background}
\label{sec:background}
    
    \para{Simplicial complexes.}
        A \emph{simplicial complex} $K$ is a finite collection of simplices such that every face of a simplex of $K$ is in $K$ and the intersection of any two simplices of $K$ is a face of each of them~\cite{Munkres1984}.
        The $0$-, $1$- and $2$-simplices correspond to vertices, edges and triangles.
        An \emph{oriented simplex} is a simplex with a chosen ordering of its vertices.
        For the remainder of this paper, let $K$ be an oriented simplicial complex on a vertex set $[n] = \{1,2,\ldots, n\}$.
        Let $S_p(K)$ denote the collection of all oriented $p$-simplices of $K$ and $n_p = |S_p(K)|$.
        The \emph{$p$-skeleton} of $K$ is denoted as $K^{(p)} :=\bigcup_{0 \leq i\leq p} S_i(K)$.
        Let $\dime{K}$ denote the dimension of $K$.
        For a review of simplicial complexes, see \cite{Ghrist2014, GradyPolimeni2010, Munkres1984}.
    
    \para{Laplace operators on simplicial complexes.}
        The $i$-th \emph{chain group} $C_i(K) = C_i(K, \Rspace)$ of a complex $K$ with coefficient $\Rspace$ is a vector space over the field $\Rspace$ with basis $S_i(K)$.
        The $i$-th \emph{cochain group} $C^i(K) = C^i(K, \Rspace)$ is the dual of the chain group, defined by $C^i(K):= \Hom(C_i(K), \Rspace)$, where $\Hom(C_i(K), \Rspace)$ denotes all homomorphisms of $C_i(K)$ into $\Rspace$. The coboundary operator, $\delta_i \colon C^i(K) \to C^{i+1}(K)$,  is defined as
        $ (\delta_if)([v_0,\ldots,v_{i+1}]) = \sum_{j=1}^{i+1} (-1)^{j}  f([v_0,\ldots,\hat v_j, \ldots, v_{i+1}]), $
        where $\hat v_j$ denotes that the vertex $v_j$ has been omitted.
        It satisfies the property   $\delta_i \delta_{i-1} = 0$ which implies that $\textrm{im} (\delta_{i-1}) \subset \textrm{ker}(\delta_i)$.
        The boundary operators, $\delta^\ast_i$,  are the adjoints of the  coboundary operators,
    
        $$ \cdots \ \ 
        C^{i+1}(K) 
        \underset{\delta_i^\ast}{\overset{\delta_i}{\leftrightarrows}}
        C^{i}(K) 
        \underset{\delta_{i-1}^\ast}{\overset{\delta_{i-1}}{\leftrightarrows}}
        C^{i-1}(K) 
        \ \ \cdots
        $$
        satisfying $(\delta_i a, b)_{C^{i+1}} = (a, \delta^\ast_i b)_{C^{i}}$ 
        for every $a \in C^i(K)$ and $b \in C^{i+1}(K)$, where $(\cdot, \cdot)_{C^i}$ denote the scalar product on the cochain group.
        
        Following \cite{HorakJost2013}, we define three combinatorial Laplace operators that operate on $C^i(K)$ (for the $i$-th dimension).
        Namely, the \emph{up Laplacian}, 
        $$\mathcal{L}_i^{\textrm{up}}(K) = \delta_i^\ast \delta_i,$$
        the \emph{down Laplacian},
        $\mathcal{L}_i^{\textrm{down}}(K) = \delta_{i-1} \delta_{i-1}^\ast,$
        and the \emph{Laplacian},
        $\mathcal{L}_i(K) =  \mathcal{L}_i^{\textrm{up}}(K) +  \mathcal{L}_i^{\textrm{down}}(K).$
        All three operators are self-adjoint, non-negative, compact and enjoy a collection of spectral properties, as detailed in~\cite{HorakJost2013}. We restrict our attention to the up Laplacians. 
    
    \para{Explicit expression for the up Laplacian.}
        To make the expression of up Laplacian explicit, we need to choose a scalar product on the coboundary vector spaces, which can be viewed in terms of weight functions~\cite{HorakJost2013}.
        In particular, the weight function $w$ is evaluated on the set of all simplices of $K$, 
        $w: \bigcup_{i=0}^{\dim K} S_i(K) \to \Rspace^+,$
        where the weight of a simplex $f$ is $w(f)$.
        Let $w_i\colon S_i(K) \to \Rspace^+.$
        Then $C^i(K)$ is the space of real-valued functions on $S_i(K)$, with inner product 
        $(a, b)_{C^i} := \sum_{f \in S_i(K)} w_i(f) a(f) b(f), $ 
        for every $a, b \in C^i(K)$.
        
        Choosing the natural bases, we identify each coboundary operator $\delta_p$ with an incidence matrix $D_p$.
        The \emph{incidence matrix} $D_p \in \mathbb R^{n_{p+1}} \times \mathbb R^{n_{p}}$ encodes which $p$-simplices are incident to which $(p+1)$ simplices in the complex, and is defined as 
        
        $$ D_p(i,j)  = \begin{cases}
        0 & \textrm{if $\sigma_j^{p}$ is not on the boundary of $\sigma_i^{p+1}$} \\
        1 & \textrm{if $\sigma_j^{p}$ is coherent with the induced orientation of $\sigma_i^{p+1}$} \\
        -1 & \textrm{if $\sigma_j^{p}$ is not coherent with the induced orientation of $\sigma_i^{p+1}$}
        \end{cases}
        $$
        Let $D_p^T$ be the transpose of $D_p$.
        Let $W_i$ be the diagonal matrix representing the scalar product on $C^i(K)$.
        The $i$-dimensional up Laplacian can then be expressed in the chosen bases, as the matrix
        
        $$\Lcal_{K,i} : = \mathcal{L}_i^{\textrm{up}}(K) = W^{-1}_{i}D_i^T W_{i+1} D_i. $$
        With this notation, $L = \Lcal_{K,0}$ is the graph Laplacian.
    
    \para{Effective resistance.}
        We quickly review the notation in~\cite{SpielmanSrivastava2011} regarding effective resistance.
        Let $G = (V,E,w)$ be a connected weighted undirected graph with $n$ vertices and $m$ edges and edge weights $w_e \in \Rspace^+$.
        $W$ is an $m \times m$ diagonal matrix with $W(e,e) = w_e$.
        Suppose the edges are oriented arbitrarily.
        Its graph Laplacian $L \in \Rspace^{n \times n}$ can be written as
        
        $$L = B^T W B, $$
        where $B \in \Rspace^{m \times n}$ is the signed edge-vertex incidence matrix, that is,
        
        $$
        B(i,j)  = \begin{cases}
        0 & \textrm{if vertex $j$ is not on the boundary of edge $i$}\\
        1 & \textrm{if $j$ is $i$'s head} \\
        -1 & \textrm{if $j$ is $i$'s tail}. 
        \end{cases}
        $$
        The effective resistance $R_e$ at an edge $e$ is the energy dissipation (potential difference) when a unit current is injected at one end and removed at the other end of $e$~\cite{SpielmanSrivastava2011}.
        Define the matrix 
        $R := B (L)^{+} B^T = B (B^T W B)^+ B^T,$ 
        where  $L^+$ is the Moore-Penrose pseudoinverse of $L$.
        The diagonal entry $R(e, e)$ of $R$, is the effective resistance $R_e$ across $e$.
        That is, $R_e = R(e,e)$.
        
        The above expression for $L$ is consistent with previous notation of up Laplacian, by setting $B = D_0$, $W = W_1$ for 
        $L =  \Lcal_{K,0} = W^{-1}_{0}D_0^T W_{1} D_0$.
        Suppose $W_0 = I$ (identity matrix), then $R$ could be expressed as 
        $R = D_0 (L)^+ D_0^T = D_0 (D_0^T W_{1} D_0)^+ D_0^T.$
    
    \para{Graph sparsification.}
        There are several different notions of approximation for graph sparsification, including the following based on spectral properties of the  associated graph Laplacian.
        We say  $H =(V,F,u)$ is an \emph{$\epsilon$-approximate sparse graph} of $G=(V,E,w)$ if $F \subset E$ and
        \begin{equation} 
        \label{eq:GraphSparse}
        (1-\epsilon) L_G \, \preceq \, L_H \, \preceq \, (1+\epsilon) L_G , 
        \end{equation}
        where $L_G$ and $L_H$ are the graph Laplacians of $G$ and $H$ respectively  and the inequalities are to be  understood in the sense of the semi-definite matrix ordering.
        That is, $\forall x \in \Rspace^n$, 
        $(1 - \epsilon) x^T L_G x  \,\leq \, x^T L_H x \,\leq \,(1+ \epsilon) x^T L_G x$.
	

\section{Sparsification of simplicial complexes}
\label{sec:algorithm}
    
    To prove the existence of an $\epsilon$-approximate sparse simplicial complex, we will follow the approach of \cite{SpielmanTeng2011} for the analogus problem for graphs.
    
    \para{Generalized  effective resistance for simplicial complexes.}
        To generalize effective resistance for simplices beyond dimension 1
        (i.e. edges), we consider the operator $R_{i} \colon C^{i} \to C^{i}$,
        defined by
        
        $$
        R_{i} = D_{i-1} (\mathcal{L}_{K, i-1})^\pinv D_{i-1}^T = D_{i-1} \left(
        W_{i-1}^{-1}D_{i-1}^T W_{i} D_{i-1} \right)^\pinv D_{i-1}^T. $$
        Specifically, setting $W_{i-1} = I$, we have
        
        $$
        R_{i} = D_{i-1} (\mathcal{L}_{K, i-1})^\pinv D_{i-1}^T
        = D_{i-1} \left( D_{i-1}^T W_{i} D_{i-1} \right)^\pinv D_{i-1}^T,
        $$
        which is the projection onto the image of $D_{i-1}$\footnote{For the rest of this section, we will assume $W_{i-1} = I$ in the simplicial complex $K$. Our results hold for any other choice of weights $W_{i-1}$ in dimension $i-1$ since for symmetric matrices $A, B$, $A \succeq B$ if and only if $DA \succeq DB$ for any positive definite diagonal matrix $D$.}
        The \emph{generalized effective resistance} on the $i$-dimensional simplex $f$, is defined to be the diagonal entry, $R_i(f,f)$.
        
        For  $i=1$, the generalized effective resistance reduces to the effective resistance on the graph  \cite{GhoshBoydSaberi2008}.
        That is, substituting $B = D_{0}$ and $W_0 = I$ in the notation from Section \ref{sec:background}, we have $R = R_1= D_0 (D_0^T W_{1} D_0)^+ D_0^T$.
        
        \begin{algorithm}[ht!]
        	\caption{$J= {\mathrm{\mathbf{Sparsify}}(K,i,q)}$} 
        	\label{alg:sparsify}
        	\KwData{A weighted, oriented simplicial complex $K$, a dimension $i$ (where $1 \leq i \leq \dime{K}$), and an integer $q$.  } 
        	\medskip
        	\KwResult{A weighted, oriented simplicial complex $J$ which is sparsified at dimension $i$, with equivalent $(i-1)$-skeleton to $K$ and $\dime{J} = i$. }
        	\medskip 
        	$J:= K^{(i-1)}$\\
        	Sample $q$ $i$-dimensional simplices independently with replacement according to the probability 
        	$$p_f = \frac{w(f) R_i(f,f)}{\sum_f w(f) R_i(f,f)},$$
        	and add sampled simplices to $J$ with weight  $w(f)/qp_f$. 
        	If a simplex is chosen more than once, the weights are summed. 
        \end{algorithm}
    
    \para{Sparsification algorithm.} 
        Algorithm \ref{alg:sparsify}, is a  natural generalization of the ${\mathrm{\mathbf{Sparsify}}}$ Algorithm given in \cite{SpielmanSrivastava2011}.
        The algorithm sparsifies a given simplicial complex $K$ at a fixed dimension $i$ (while ignoring all dimensions larger than $i$).
        The main idea is to include each $i$-simplex $f$ of $K$ in the sparsifier $J$ with probability proportional to its generalized effective resistance.
        Specifially, for a fixed dimension $i$, the algorithm chooses a random $i$-simplex $f$ of $K$ with probability $p_f$ (proportional to $w_fR_f$), and adds $f$ to $J$ with weight $w_f/qp_f$; 
        then $q$ samples are taken independently with replacement, while summing weights if a simplex is chosen more than once.
        The following theorem (Theorem \ref{thm:main}) shows that if $q$ is sufficiently large, the $(i-1)$-dimensional up Laplacians of $K$ and $J$ are close.
        
        \begin{theorem} 
        	\label{thm:main}
        	Let $K$ be a weighted, oriented simpicial complex, and $J= {\mathrm{\mathbf{Sparsify}}}(K,i,q)$ for some fixed $i$ (where $1 \leq i \leq \dime{K}$). 
        	Suppose $K$ and $J$ have $(i-1)$-th up Laplacians $\mathcal{L}_K := \mathcal{L}_{K,i-1}$ and $\mathcal{L}_J := \mathcal{L}_{J,i-1}$ respectively.
        	Let $n_{i-1}$ denote the number of $(i-1)$-simplices in $K$. 
        	Fix $\epsilon > 0$ (where $1/\sqrt{n_{i-1}} < \epsilon \leq 1$), and let $q =9 C^2 n_{i-1} \log n_{i-1} / \epsilon^2$, where $C$ is an absolute constant.  
        	If $n_{i-1}$ is sufficently large, then with probability at least $1/2$, 
        
        	\begin{equation} 
        	\label{eq:SpectralSparse} 
        	(1-\epsilon)\Lcal_K \,\preceq\, \Lcal_J \,\preceq \, (1+\epsilon)\Lcal_K,
        	\end{equation}
        	where the inequalities are to be understood in the sense of the semi-definite matrix ordering.
        	Equivalently, this means, $\forall x \in \Rspace^{n_{i-1}}$, 
        	$(1-\epsilon) x^T \Lcal_K x  \,\leq \, x^T \Lcal_J x  \,\leq \, 
        	(1+ \epsilon) x^T \Lcal_K x.$
        \end{theorem}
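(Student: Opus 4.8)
The plan is to transcribe the argument of Spielman and Srivastava~\cite{SpielmanSrivastava2011} to this setting, with the graph Laplacian $B^T W B$ replaced by the $(i-1)$-st up Laplacian $\mathcal{L}_K = D_{i-1}^T W_i D_{i-1}$ (taking $W_{i-1} = I$, as permitted by the footnote in Section~\ref{sec:algorithm}) and with edges replaced by $i$-simplices. First I would record two facts about the generalized effective resistance. Writing $d_f \in \Rspace^{n_{i-1}}$ for the row of $D_{i-1}$ indexed by an $i$-simplex $f$, we have $R_i(f,f) = d_f^T \mathcal{L}_K^{+} d_f = \|(\mathcal{L}_K^{+})^{1/2} d_f\|^2$; this is strictly positive, since $d_f$ has $i+1 \ge 2$ nonzero entries and hence $d_f \notin \ker D_{i-1} = \ker \mathcal{L}_K = \ker (\mathcal{L}_K^{+})^{1/2}$, so every sampling probability $p_f$ in Algorithm~\ref{alg:sparsify} is well defined and positive. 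Cycling the trace, $\sum_f w_i(f) R_i(f,f) = \mathrm{tr}\big(W_i D_{i-1} \mathcal{L}_K^{+} D_{i-1}^T\big) = \mathrm{tr}\big(\mathcal{L}_K \mathcal{L}_K^{+}\big) = \mathrm{rank}(\mathcal{L}_K) \le n_{i-1}$, so $p_f = w_i(f) R_i(f,f)/\mathrm{rank}(\mathcal{L}_K)$.

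Next I would reduce \eqref{eq:SpectralSparse} to an operator-norm estimate. Let $P := (\mathcal{L}_K^{+})^{1/2} \mathcal{L}_K (\mathcal{L}_K^{+})^{1/2}$ be the orthogonal projection of $\Rspace^{n_{i-1}}$ onto $\mathrm{im}(\mathcal{L}_K)$, and let $\mathcal{L}_J = \mathcal{L}_{J,i-1}$ be the up Laplacian of the sparsifier, so that $\mathcal{L}_J = \sum_{j=1}^q \frac{w_i(f_j)}{q\,p_{f_j}}\, d_{f_j} d_{f_j}^T$ where $f_1,\dots,f_q$ are the sampled simplices. Since $\mathcal{L}_J$ is a nonnegative combination of the matrices $d_f d_f^T$, its kernel contains $\ker D_{i-1} = \ker \mathcal{L}_K$, on which both sides of \eqref{eq:SpectralSparse} vanish; hence it suffices to verify \eqref{eq:SpectralSparse} for $x \in \mathrm{im}(\mathcal{L}_K)$. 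Substituting $x = (\mathcal{L}_K^{+})^{1/2} y$ with $y \in \mathrm{im}(\mathcal{L}_K)$ (a bijection of this subspace onto itself) turns $x^T \mathcal{L}_K x$ into $\|y\|^2$ and $x^T \mathcal{L}_J x$ into $y^T \tilde{\Pi} y$, where $\tilde{\Pi} := (\mathcal{L}_K^{+})^{1/2} \mathcal{L}_J (\mathcal{L}_K^{+})^{1/2}$. So \eqref{eq:SpectralSparse} is implied by $\|\tilde{\Pi} - P\| \le \epsilon$ in the operator norm.

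Then I would express $\tilde{\Pi}$ as an empirical average of i.i.d.\ bounded rank-one matrices and invoke a matrix concentration bound. Set $v_f := (\mathcal{L}_K^{+})^{1/2}\big(\sqrt{w_i(f)}\, d_f\big)$ and $z_j := p_{f_j}^{-1/2} v_{f_j}$; then $\tilde{\Pi} = \frac1q \sum_{j=1}^q z_j z_j^T$. The defining feature of the sampling distribution is that $p_f$ is proportional to $\|v_f\|^2 = w_i(f) R_i(f,f)$, which forces $\|z_j\|^2 = \sum_g w_i(g) R_i(g,g) = \mathrm{rank}(\mathcal{L}_K) \le n_{i-1}$ deterministically, while $\Exp[z_j z_j^T] = \sum_f v_f v_f^T = (\mathcal{L}_K^{+})^{1/2} \mathcal{L}_K (\mathcal{L}_K^{+})^{1/2} = P$, with $\|P\| = 1$. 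The bound of Rudelson and Vershynin for sums of i.i.d.\ rank-one matrices (the same one used in~\cite{SpielmanSrivastava2011}) then gives $\Exp\,\|\tilde{\Pi} - P\| \le C\sqrt{(\log q)/q}\cdot \sqrt{\mathrm{rank}(\mathcal{L}_K)} \le C\sqrt{n_{i-1}(\log q)/q}$. Plugging in $q = 9 C^2 n_{i-1} \log n_{i-1} / \epsilon^2$ and using $\epsilon > 1/\sqrt{n_{i-1}}$ gives $\log q = (2 + o(1))\log n_{i-1}$ as $n_{i-1} \to \infty$, so for $n_{i-1}$ large enough the right-hand side is at most $\epsilon/2$; Markov's inequality then yields $\|\tilde{\Pi} - P\| \le \epsilon$ with probability at least $1/2$, which by the previous paragraph gives \eqref{eq:SpectralSparse}.

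The step needing the most care is the concentration estimate, and it is also the main point where the simplicial setting genuinely differs from the connected-graph case: there $\mathcal{L}$ has a one-dimensional kernel and $P$ has rank exactly $n-1$, whereas here $\mathcal{L}_{K,i-1}$ can have a large kernel (it contains $\mathrm{im}(\delta_{i-2})$ and reflects the cohomology in degree $i-1$), so one must work on the subspace $\mathrm{im}(\mathcal{L}_K)$ throughout, track the pseudoinverses carefully, and settle for the bound $\mathrm{rank}(\mathcal{L}_K) \le n_{i-1}$ — which is exactly why the theorem is phrased in terms of $n_{i-1}$. Checking the hypotheses of the Rudelson–Vershynin estimate (the deterministic norm bound on the $z_j$, the identification of their mean, and the bookkeeping of the absolute constant so that the failure probability is at most $1/2$) is then routine.
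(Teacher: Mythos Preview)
Your proposal is correct and follows essentially the same route as the paper: both transcribe the Spielman--Srivastava argument, reduce \eqref{eq:SpectralSparse} to an operator-norm bound on a projection, express the sparsified object as an average of i.i.d.\ rank-one matrices, apply the Rudelson--Vershynin estimate, and finish with Markov's inequality. The only cosmetic difference is that the paper works with the $n_i \times n_i$ projection $\Pi = W_i^{1/2} R_i W_i^{1/2}$ onto $\mathrm{im}(W_i^{1/2} D_{i-1})$ and bounds $\|\Pi S \Pi - \Pi\|$, whereas you work with the dual $n_{i-1} \times n_{i-1}$ projection $P = (\mathcal{L}_K^{+})^{1/2} \mathcal{L}_K (\mathcal{L}_K^{+})^{1/2}$; these are equivalent via the singular value decomposition of $W_i^{1/2} D_{i-1}$, and your write-up in fact supplies more detail (the trace computation, the reduction to $\mathrm{im}(\mathcal{L}_K)$, and the explicit check of the Rudelson--Vershynin hypotheses) than the paper's sketch. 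One tiny slip: the reason $d_f \notin \ker D_{i-1}$ is not that $d_f$ has nonzero entries, but that $d_f$ is a nonzero \emph{row} of $D_{i-1}$ and hence lies in $(\ker D_{i-1})^\perp$.
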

        
        \begin{proof} 
        	For simplicity in notation, let $\mathcal{L} =  \mathcal{L}_{K}:= \mathcal{L}_{K, i-1}$ and $\tilde{\mathcal{L}} =  \mathcal{L}_{J} 
        	:= \mathcal{L}_{J, i-1}$, with corresponding weight matrices denoted as $W_i$ and $\tilde{W}_i$ respectively.  
        	
        	Our proof follows  the proof of \cite[Theorem 1]{SpielmanSrivastava2011}. 
        	We consider the projection matrix $\Pi = W_i^{1/2} R_i W_i^{1/2}$. 
        	We also define the $n_i \times n_i$ nonnegative, diagonal matrix $S_i$ with entries 
        
        	$$
        	S_i(f,f) = \frac{ \tilde{w}_f }{w_f} = \frac{\textrm{\# times $f$ is sampled}}{q p_f},
        	$$
        	where the random entry $S_i(f,f)$ captures the number of $i$-simplices $f$ included in $J$ by ${\mathrm{\mathbf{Sparsify}}}$. 
        	The weight of an $i$-simplex $f$ in $J$ is $\tilde{w}_f = S_i(f,f) w_f$. 
        	Since $\tW_{i-1} = W_{i-1}S_{i-1} = W_{i-1}^{1/2} S_{i-1} W_{i-1}^{1/2}$ and $\tW_{i} = W_{i}S_{i} = W_{i}^{1/2} S_{i} W_{i}^{1/2}$, the $(i-1)$-dimensional up Laplacian of $J$ is therefore
        	
        	$$
        	\tLcal = \Lcal_{J, i-1} =  \tW_{i-1} D_{i-1}^T \tW_i  D_{i-1} 
        	= (W_{i-1}^{1/2} S_{i-1} W_{i-1}^{1/2}) D_{i-1}^T (W_{i}^{1/2} S_{i} W_{i}^{1/2}) D_{i-1}.
        	$$
        	Since $\Exp S_i = I$, $\Exp S_{i-1} = I$ and suppose $W_{i-1} = I$, therefore $\mathbb E \tilde{\mathcal{L}} = \mathcal{L}$. It is not difficult to show that if $S$ is a non-negative diagonal matrix such that 
        	
        	\begin{equation} \label{eq:EquivS}
        	\| \Pi S \Pi - \Pi \Pi \|_2 \leq \epsilon 
        	\end{equation}
        	then \eqref{eq:SpectralSparse} holds. 
        	But, $ \Pi S \Pi $ can be expressed as the average of symmetric, rank-one matrices. 
        	Now applying a result of  Rudelson and Vershynin \cite[Theorem 3.1]{RudelsonVershynin2007}, we have that 
        	$\mathbb E \| \Pi S \Pi - \Pi \Pi \|_2 \leq \frac{\epsilon}{2}$. By Markov's inequality this implies that \eqref{eq:EquivS} holds with probability at least $\frac{1}{2}$.
        \end{proof}
    
    \para{Computing the generalized effective resistance.}
        Spielman and Teng~\cite{SpielmanTeng2014} proposed a nearly linear time algorithm for solving symmetrical diagonally dominant (SDD) linear systems (that improved upon their previous results~\cite{SpielmanTeng2003}), which can be applied to graph sparsification.
        In particular, it has been proven that every weighted graph with $n$ vertices and $m$ edges has as an \emph{$\epsilon$-approximate sparse graph} with at most $O(n\cdot \log(n)/ \epsilon^2)$ edges and moreover, by sub-sampling the original graph with probabilities based on effective resistance, this graph can be found efficiently in $O(m \log(r)/\epsilon^2)$ time where $r$ is the ratio of largest to smallest edge weight \cite{SpielmanSrivastava2011}.
        Several recent SDD solvers based on low-stretch spanning trees improve the running time even further~\cite{KelnerOrecchiaSidford2013}. 
        The fastest known SDD solver proposed by Cohen et al. \cite{CohenKyngMiller2014} has $O(m\cdot \log^{1/2} n \log(1/\epsilon))$ time complexity for an $n \times n$ SDD matrix with $m$ non-zero entries.
        
        However, it should be noted, that while the graph Laplacian ($\mathcal{L}_{K, 0}$) is weakly SDD, the up Laplacians $\mathcal{L}_{K, i}$ for $i \geq 1$ are not diagonally dominant.
        Therefore, these fast SDD solvers cannot be used directly to compute generalized effective resistance. There has been some work on transforming non-SDD systems to or approximating non-SDD systems by an SDD system.
        In particular, using this approach, Boman et al.~\cite{BomanHendriksonVavasis2008} and Avron et al.~\cite{AvronChenShklarski2009} have proposed preconditioners for solving elliptic finite element systems in nearly linear time.
        However, more analysis is required before we can apply these approaches to speed up our sparsification algorithm.
        Solving linear systems in the $1$-dimensional up Laplacian has been studied by Cohen et al. for limited classes of complexes~\cite{CohenFasyMiller2014}.
        There is also a related line of work on spectral algorithms for $2$-dimensional  truss matrices initiated by Daitch and Spielman~\cite{DaitchSpielman2007}, although the numerical structures of such matrices are quite different.

        Alternatively, we can look at sparsification using generalized effective resistance as a form of leverage score sampling, where rows or columns of a matrix are sampled with probabilities proportional to their relative size (i.e.,~norm).
        To see this relation, define $\Phi = W_{i}^{1/2}D_{i-1}$ to be a scaled incidence matrix. Suppose $W_{i-1} = I$ and the projection matrix $\Pi = W_{i}^{1/2}R_{i}W_{i}^{1/2}$ is defined as before.
        For an $i$-dimensional simplex $f$ of $K$, the corresponding diagonal entry of $\Pi$ is given by $w(f)R_i(f,f)$.
        With a little algebraic manipulation, we can show that this is precisely the leverage score of the row of $\Phi$ corresponding to $f$, under the $L_2$ norm. Thus the probability of sampling $f$ is given by the normalized leverage score of the corresponding row of $\Phi$.
        
        Although the computation of exact leverage scores has the same time complexity as the computation of generalized effective resistance, there has been some work in fast approximation of leverage scores \cite{DrineasMalikMahoneyWoodruff2012}.
        We may be able to use a similar approach to approximate generalized effective resistance and improve the runtime efficiency of our sparsification algorithm.
        However, a careful analysis is required to determine how the approximation of generalized effective resistance affects sparsification bounds from Equation (\ref{eq:SpectralSparse}).


\section{Generalized Cheeger inequalities for simplicial complexes} 
\label{sec:Cheeger}

    \para{Cheeger constant and inequality  for graphs.}
        The Cheeger constant for an unweighted graph $G=(V,E)$ is given by~\cite{GundertSzedlak2015} 
        
        \begin{equation} 
        \label{e:ChUnGr}
        h(G) := \min_{\varnothing \subsetneq A  \subsetneq V} \ \frac{|V| \ | E (A,V \setminus A)|}{|A| \ |V \setminus A|},
        \end{equation}
        where  $E(A, V \setminus A)$ is the set of edges that connect $A\subset V$ to $(V \setminus A) \subset V$.
        For a weighted graph, $G = (V,E,w)$,  the Cheeger constant is typically generalized to
        
        \begin{equation} \label{e:ChWeGr}
        h(G) := \min_{\varnothing \subsetneq A \subsetneq V}  \ \frac{|V|}{|A| \ |V \setminus A|}\sum_{(i,j) \in E(A,V \setminus A)} w_{ij}. 
        \end{equation}
        The Cheeger inequality for graphs takes the form: 
        $c \cdot \lambda_1(L_G) \leq  h(G) \leq C \cdot  \sqrt{\lambda_1(L_G)}$, 
        where $\lambda_1$ is the first non-trivial eigenvalue of a graph Laplacian and $c$ and $C$ are constants which depend on the choice of definition for the Cheeger constant and graph Laplaican; see, {\it e.g.}, \cite[Chapter~2]{Chung1997}.
        Using the variational formulation for eigenvalues and a suitable test function, it isn't difficult to prove that for the Cheeger constant defined in \eqref{e:ChWeGr} and the weighted graph Laplacian, the first inequality (a lower bound for the Cheeger constant) is given by $\frac{1}{2} \cdot \lambda_1(L_G) \leq h(G)$.
        In the following, we prove an analogous inequality for weighted simplicial complexes, which we refer to as a generalized Cheeger inequality.
        This inequality gives a lower bound on the Cheeger constant; the upper bound isn't possible for weighted simplicial complexes by the argument of Gundert and Szedl\'ak~\cite[p.5]{GundertSzedlak2015}.
    
    \para{Generalized Cheeger inequality for simplicial complexes of Gundert and Szedl\'ak. }
        We first recall the generalized Cheeger inequality for simplicial complexes of Gundert and Szedl\'ak~\cite{GundertSzedlak2015}, which only 
        For a $k$-dimensional simplicial complex $K$, its \emph{$k$-dimensional completion} is defined to be 
        
        \begin{align*}
        \bar{K}:= K \bigcup \{\tau^* \in {V \choose k+1} \mid 
        (\tau^* \setminus {v}) \in X,  \forall v \in \tau^*\}.
        \end{align*} 
        When $K$ has a complete $(k-1)$-skeleton, $\bar{K}$ is the complete $k$-dimensional complex.
        The generalized Cheeger constant for \emph{unweighted} simplicial complexes is defined to be  
        
        \begin{equation} \label{e:ChUnSC}
        h(K) := \min_{\substack{V = \bigsqcup_{i=0}^k A_i\\ A_i \neq \varnothing}}
        \frac{|V||F(A_0, A_1, \ldots, A_k)|}{|F^*(A_0, A_1, \ldots, A_k)|},
        \end{equation}
        where $F(A_0,A_1,\ldots,A_k)$ and $F^*(A_0,A_1,\ldots,A_k)$ are the sets
        of all $k$-simplices of $K$ and $\overline{K}$, respectively, with one node in
        $A_i$ for all $0 \leq i \leq k$.
        
        \begin{theorem}[{\cite[Theorem~2]{GundertSzedlak2015}}] \label{thm:GS2015}
        	If $\lambda_1(\Lcal_K)$ is the first non-trivial eigenvalue of the $k$-th up-Laplacian and if every $(k-1)$-face is contained in at most $C^*$ $k$-face of $K$, then 
        	$$
        	\frac{|V|}{(k+1) \ C^*} \cdot \lambda_1(\Lcal_K)  \leq  h(K). 
        	$$
        \end{theorem}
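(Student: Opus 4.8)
The plan is to run a higher-dimensional version of the test-function argument behind the easy direction of the graph Cheeger inequality. Write the up-Laplacian in the theorem as $\Lcal_K=\delta_{k-1}^\ast\delta_{k-1}$ acting on $(k-1)$-cochains of $K$, so $\langle\Lcal_K x,x\rangle=\|\delta_{k-1}x\|^2$ and, since $\Lcal_K$ annihilates $\ker\delta_{k-1}$,
$$\lambda_1(\Lcal_K)=\min\Bigl\{\tfrac{\|\delta_{k-1}x\|^2}{\|x\|^2}\ :\ 0\neq x\in C^{k-1}(K),\ x\perp\ker\delta_{k-1}\Bigr\}.$$
Fix any partition $V=\bigsqcup_{i=0}^{k}A_i$ into nonempty classes and let $\phi\colon K\to\Delta^k$ be the simplicial map sending $A_i$ to the $i$-th vertex of the standard $k$-simplex. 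As a test cochain I would take the pullback $x=\phi^{*}g$ of the (up to scale unique) co-closed cochain $g\in C^{k-1}(\Delta^k)$ with $g([0,\dots,\widehat{m},\dots,k])=(-1)^m$; then $x$ is supported on the ``rainbow'' $(k-1)$-simplices of $K$ --- those with vertices in $k$ distinct classes --- with entries $\pm1$ fixed by orientation. Because $\phi$ collapses every simplex having two vertices in one class, $\delta_{k-1}x=\phi^{*}(\delta_{k-1}g)$ is supported only on rainbow $k$-simplices of $K$, i.e.\ on $F:=F(A_0,\dots,A_k)$, and on each such $\tau$ all $k+1$ facets are rainbow, so $(\delta_{k-1}x)(\tau)=\pm(k+1)$. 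Hence $\|\delta_{k-1}x\|^2=(k+1)^2|F|$ while $\|x\|^2=|T|$, where $T=\bigsqcup_m T_m$ is the set of rainbow $(k-1)$-simplices of $K$, split according to the class $A_m$ each one misses.

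It then remains to bound $|T|$ from below in terms of $|F^{*}|$, where $F^{*}:=F^{*}(A_0,\dots,A_k)$ counts the rainbow $k$-simplices of the completion $\overline{K}$. I would double count incident pairs $(\sigma,\tau^{*})$ with $\tau^{*}\in F^{*}$ and $\sigma\subset\tau^{*}$ a facet: each $\tau^{*}\in F^{*}$ has exactly $k+1$ facets, every one a rainbow $(k-1)$-simplex lying in $K$ by the definition of $\overline{K}$, so there are $(k+1)|F^{*}|$ such pairs; conversely, for fixed $m$ and $\sigma\in T_m$ the $\tau^{*}\in F^{*}$ with facet $\sigma$ are of the form $\sigma\cup\{a\}$ with $a\in A_m$, and invoking the hypothesis that every $(k-1)$-face of $K$ is contained in at most $C^{*}$ $k$-faces of $K$ bounds their number by $C^{*}$. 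Summing over $m$ gives $(k+1)|F^{*}|\le C^{*}\sum_m|T_m|=C^{*}|T|$. The delicate point here is tying the $k$-simplices of the completion $\overline{K}$ back to genuine $k$-faces of $K$, so that a degree bound stated for $K$ can legitimately be applied.

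The step I expect to be the real obstacle is feeding the test cochain into the variational formula in the first place: in dimension $k-1\ge1$ the kernel of $\Lcal_K$ is the whole space of $(k-1)$-cocycles --- coboundaries together with harmonic cochains --- not merely the constants as for graphs, and the rainbow cochain $x$ is in general not orthogonal to it ($\delta_{k-2}^\ast x\neq0$ already when $k=2$). I would replace $x$ by its component $x'$ orthogonal to $\ker\delta_{k-1}$; since $\delta_{k-1}\delta_{k-2}=0$ one still has $\delta_{k-1}x'=\delta_{k-1}x$, so the numerator is unchanged and the task reduces to showing $\|x'\|^2$ is comparable to $\|x\|^2=|T|$, i.e.\ to controlling the coboundary (and harmonic) energy of the rainbow cochain --- which is where one would need $K$ to have a complete $(k-1)$-skeleton, so that the relevant lower-dimensional spectral gap is as large as $|V|$. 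With both ingredients in place, substituting $\|\delta_{k-1}x'\|^2=(k+1)^2|F|$ and the lower bound on $\|x'\|^2$ into the variational characterization yields $\lambda_1(\Lcal_K)\le (k+1)C^{*}\,|F|/|F^{*}|$ for every partition; multiplying by $|V|/\bigl((k+1)C^{*}\bigr)$ and minimizing over partitions gives $\tfrac{|V|}{(k+1)C^{*}}\lambda_1(\Lcal_K)\le h(K)$.
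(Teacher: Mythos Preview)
The paper does not prove this statement: Theorem~\ref{thm:GS2015} is quoted as \cite[Theorem~2]{GundertSzedlak2015} without argument, and the paper's own contribution in this section is the weighted analogue (Proposition~\ref{prop:newCheeger}), for which the authors only remark that it ``can be proven using a slight modification of the arguments in~\cite{GundertSzedlak2015}.'' There is therefore no in-paper proof to compare your proposal against.

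On the substance of your sketch, the test-function strategy --- pulling back a co-closed $(k-1)$-cochain from $\Delta^k$ along the coloring map $\phi$, then projecting orthogonally to cocycles before invoking the variational formula --- is the natural one and is in the spirit of~\cite{GundertSzedlak2015}. The step that does not go through is your double count. You want, for each rainbow $(k-1)$-simplex $\sigma\in T_m$, that at most $C^*$ elements of $F^*$ contain $\sigma$, and you invoke the degree hypothesis on $K$ to get this. But $F^*$ consists of $k$-simplices of the \emph{completion} $\overline{K}$, not of $K$, and the hypothesis bounds only the number of $k$-faces of $K$ through $\sigma$. Already for $k=1$ with $K$ a sparse graph on the full vertex set, $\overline{K}$ is the complete graph and a vertex $v\in A_1$ lies in $|A_0|$ elements of $F^*$, which can be arbitrarily larger than the maximum degree $C^*$; so the claimed inequality $(k+1)|F^*|\le C^*|T|$ is simply false in general. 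You flag this as ``delicate'' but then proceed as though it holds, and it cannot be repaired along these lines --- the role of $C^*$ in the actual argument of~\cite{GundertSzedlak2015} enters through a different mechanism, and you will need to consult that paper to see how the projection estimate and the degree bound fit together.
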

    
    \para{Remark.}
        Recall that the Cheeger inequality for graphs includes an upper bound of the Cheerger constant $h(G)$ in terms of $\lambda_1(L_G)$.
        However, as pointed out by Gundert and Szedl\'ak, $\lambda_1(\Lcal_K) = 0$ does not imply $h(K)=0$~\cite{ParzanchevskiRosenthalTessler2016}, a higher-dimensional analogue of this upper bound of the form 
        $h(K) \leq C \cdot \lambda_1(\Lcal_K)^{\frac{1}{m}}$ is not possible~\cite{GundertSzedlak2015}. 
        We also remark that an alternative Cheeger inequality is given in \cite{ParzanchevskiRosenthalTessler2016}.
    
    \para{A generalized Cheeger constant for weighted simplicial complexes.}
        In analogy to the generalization of the unweighted Cheeger constant in Equation~\eqref{e:ChUnGr} to the weighted Cheeger constant in Equation \eqref{e:ChWeGr}, we define the generalized Cheeger constant for weighted simplicial complexes by 
        
        \begin{equation} \label{e:ChWeSC}
        h(K) := \min_{\substack{V = \bigsqcup_{i=0}^k A_i\\ A_i \neq
        		\varnothing}}  \ 
        \frac{|V|}{|F^*(A_0, A_1, \ldots, A_k)|}  \ 
        \sum_{X \in F(A_0,A_1,\ldots,A_k)}  w_k(X).
        \end{equation} 
        Observe that Equation~\eqref{e:ChWeSC} agrees with   Equation~\eqref{e:ChUnSC} in the case when all weights are unity. The following result can be proved analogously to Theorem \ref{thm:GS2015}. 
        
        \begin{proposition} 
        	\label{prop:newCheeger}
        	If $\lambda_1(\Lcal_K)$ is the first non-trivial eigenvalue of the $k$-th weighted up-Laplacian and if every $(k-1)$-face is contained in at most $C^*$ $k$-face of $K$, then 
        	$$
        	\frac{|V|}{(k+1) \ C^*} \cdot \lambda_1(\Lcal_K)  \leq  h(K). 
        	$$
        \end{proposition}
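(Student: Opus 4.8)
The plan is to follow the proof of Theorem~\ref{thm:GS2015} closely, tracking where the $k$-simplex weights $w_k$ enter and checking that they appear only where the definition~\eqref{e:ChWeSC} of $h(K)$ already accommodates them. Write $\Lcal_K$ for the up Laplacian in the statement, acting on $C^{k-1}(K)$; in coordinates, with the weights on $(k-1)$-simplices taken trivial as elsewhere in this paper, it is $\Lcal_K=D_{k-1}^T W_k D_{k-1}$. Its first non-trivial eigenvalue has the variational description
\[
\lambda_1(\Lcal_K)\;=\;\min_{\substack{g\in C^{k-1}(K),\ g\neq 0\\ g\perp\ker\Lcal_K}}\frac{(g,\Lcal_K g)}{(g,g)}\;=\;\min_{\substack{g\neq 0\\ g\perp\ker D_{k-1}}}\frac{\sum_{X\in S_k(K)} w_k(X)\,\bigl((\delta_{k-1}g)(X)\bigr)^2}{\sum_{\sigma\in S_{k-1}(K)} g(\sigma)^2},
\]
where $\ker\Lcal_K=\ker\bigl(W_k^{1/2}D_{k-1}\bigr)=\ker D_{k-1}$ since $W_k$ is positive definite; in particular the kernel, hence the meaning of ``first non-trivial'', coincides with that of the unweighted complex underlying $K$.

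Next, fix a partition $V=\bigsqcup_{i=0}^k A_i$ with all $A_i\neq\varnothing$ attaining the minimum in~\eqref{e:ChWeSC}, and abbreviate $F:=F(A_0,\dots,A_k)$, $F^*:=F^*(A_0,\dots,A_k)$. The idea is to feed into the Rayleigh quotient exactly the ``colorful'' test $(k-1)$-cochain $g$ used by Gundert and Szedl\'ak: one whose value on a $(k-1)$-simplex depends only on which parts $A_i$ its vertices meet and on the sizes $|A_i|$, and which satisfies the three properties their proof needs --- (i) $g\perp\ker D_{k-1}$; (ii) $(\delta_{k-1}g)(X)=\pm M$ for a fixed constant $M=M(K,\{A_i\})$ whenever $X$ is a rainbow $k$-simplex (one vertex in each $A_i$) and $(\delta_{k-1}g)(X)=0$ otherwise; and (iii) the combinatorial lower bound $\sum_{\sigma\in S_{k-1}(K)} g(\sigma)^2\ \ge\ M^2|F^*|/\bigl((k+1)C^*\bigr)$, obtained by double-counting the colorful $(k-1)$-faces of $K$ against the rainbow $k$-faces of the completion $\bar K$ using that each $(k-1)$-face of $K$ lies in at most $C^*$ of its $k$-faces. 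Each of (i)--(iii) is a statement about the $(k-1)$-skeleton of $K$ and its orientations alone, with no reference to $w_k$, so all three carry over verbatim from the unweighted case.

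Substituting $g$ into the Rayleigh quotient, the numerator equals $M^2\sum_{X\in F} w_k(X)$ --- the sole surviving occurrence of the weights, and exactly the weighted sum appearing in~\eqref{e:ChWeSC} --- while the denominator is controlled by (iii):
\[
\lambda_1(\Lcal_K)\ \le\ \frac{M^2\sum_{X\in F} w_k(X)}{M^2|F^*|/\bigl((k+1)C^*\bigr)}\ =\ \frac{(k+1)C^*}{|V|}\cdot\frac{|V|}{|F^*|}\sum_{X\in F} w_k(X)\ =\ \frac{(k+1)C^*}{|V|}\,h(K),
\]
the last equality because the partition was chosen to minimize the right-hand side of~\eqref{e:ChWeSC}. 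This is the claimed inequality. The only step requiring genuine care --- and the main obstacle --- is bookkeeping of this kind: one must verify that switching on the weights $w_k$ leaves (i)--(iii) untouched (the kernel is unchanged since $W_k\succ 0$, while $M$ and the double-count in (iii) are purely combinatorial), and that $w_k$ enters the Rayleigh quotient linearly and only through its numerator, so that the passage from $|F(A_0,\dots,A_k)|$ in~\eqref{e:ChUnSC} to $\sum_{X\in F} w_k(X)$ in~\eqref{e:ChWeSC} is precisely what the argument delivers. (If one additionally allows nontrivial $(k-1)$-weights, the denominator of the Rayleigh quotient acquires a factor between $\min_\sigma w_{k-1}(\sigma)$ and $\max_\sigma w_{k-1}(\sigma)$, which can be absorbed into the constant.) Applying the result to $J={\mathrm{\mathbf{Sparsify}}}(K,k,q)$ is then immediate, $J$ being a weighted complex with the same $(k-1)$-skeleton and each $(k-1)$-face in at most $C^*$ $k$-faces; and Theorem~\ref{thm:main} forces $\lambda_1(\Lcal_J)$ to lie within a factor $1\pm\epsilon$ of $\lambda_1(\Lcal_K)$.
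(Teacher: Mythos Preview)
Your proposal is correct and takes essentially the same approach as the paper, which in fact gives no proof beyond the single remark that Proposition~\ref{prop:newCheeger} ``can be proven using a slight modification of the arguments in \cite{GundertSzedlak2015} by adapting weights in the definition of the generalized Cheeger constant~\eqref{e:ChWeSC}.'' Your sketch fleshes this out: you correctly observe that the Gundert--Szedl\'ak test cochain and its combinatorial properties (orthogonality to the kernel, support of its coboundary, and the double-counting bound) depend only on the $(k-1)$-skeleton and are therefore untouched by $w_k$, while the weights enter the Rayleigh quotient linearly through its numerator alone, turning $|F|$ into $\sum_{X\in F} w_k(X)$ --- exactly the substitution built into~\eqref{e:ChWeSC}.
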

        \noindent Proposition~\ref{prop:newCheeger} can be proven using a slight modification of the arguments in \cite{GundertSzedlak2015} by adapting weights in the definition of the generalized Cheeger constant \eqref{e:ChWeSC}. 
        
        The  following result now follows from combining Theorem \ref{thm:main} and Proposition~\ref{prop:newCheeger}. 
        
        \begin{corollary} In the  setting as Theorem \ref{thm:main} and Proposition~\ref{prop:newCheeger}, we have with probability $\frac{1}{2}$
        
        	$$
        	\frac{|V|} {(k+1) \ C^*}  (1-\epsilon) \cdot  \lambda_1(\Lcal_K) \leq 
        	\frac{|V|} {(k+1) \ C^*}  \cdot  \lambda_1(\Lcal_J) \leq h(J). 
        	$$
        \end{corollary}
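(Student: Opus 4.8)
The plan is to chain the two-sided spectral comparison of Theorem~\ref{thm:main} with the weighted Cheeger bound of Proposition~\ref{prop:newCheeger} applied to the sparsified complex $J$; the real content is checking that Proposition~\ref{prop:newCheeger} transfers to $J$ and that the spectral sandwich controls the \emph{first nontrivial} eigenvalue. Write $k := i-1$, so that $\Lcal_K = \Lcal_{K,k}$ and $\Lcal_J = \Lcal_{J,k}$ are the $k$-th up Laplacians appearing in both statements. First I would record the structural invariants of $J = \mathrm{\mathbf{Sparsify}}(K,i,q)$: since $J$ retains the full $(i-1)$-skeleton $K^{(i-1)}=K^{(k)}$ and only $(k{+}1)$-simplices are sampled and reweighted, $K$ and $J$ have the same vertex set, the same $k$-skeleton, and the same weights $w_k$ on $k$-simplices. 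Hence $|V|$ is unchanged; the $(k{-}1)$-to-$k$ incidence is identical, so every $(k{-}1)$-face of $J$ lies in at most the same $C^*$ $k$-faces; the completions $\bar K$ and $\bar J$ agree in dimension $k$; and $F(A_0,\ldots,A_k)$ together with $w_k$ are unchanged, so in fact $h(J)=h(K)$. Therefore Proposition~\ref{prop:newCheeger} applies to $J$ and gives, deterministically, $\frac{|V|}{(k+1)\,C^*}\,\lambda_1(\Lcal_J)\le h(J)$, which is the right-hand inequality of the Corollary.

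For the left-hand inequality I would invoke Theorem~\ref{thm:main}: with probability at least $1/2$ we have $(1-\epsilon)\Lcal_K \preceq \Lcal_J \preceq (1+\epsilon)\Lcal_K$, equivalently $(1-\epsilon)\,x^T\Lcal_K x \le x^T\Lcal_J x \le (1+\epsilon)\,x^T\Lcal_K x$ for all $x \in \Rspace^{n_{i-1}}$. Since (under the standing assumption $W_{i-1}=I$) both matrices are symmetric positive semidefinite, this sandwich forces $\ker\Lcal_J = \ker\Lcal_K$: a vector in either kernel makes both outer bounds vanish, hence the middle quadratic form vanishes, and a vanishing PSD form puts the vector in the kernel. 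Restricting the Rayleigh quotient to the common orthogonal complement of this kernel and using the lower bound $x^T\Lcal_J x \ge (1-\epsilon)\,x^T\Lcal_K x$ yields $\lambda_1(\Lcal_J) \ge (1-\epsilon)\,\lambda_1(\Lcal_K)$ for the first nontrivial eigenvalues. Multiplying through by the positive constant $\frac{|V|}{(k+1)\,C^*}$ and concatenating with the bound from the previous paragraph gives the full chain, and the probability $1/2$ is inherited entirely from Theorem~\ref{thm:main} since Proposition~\ref{prop:newCheeger} is deterministic.

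The Rayleigh-quotient step and the arithmetic of scaling by a positive constant are routine. The points that genuinely require attention — and should be spelled out rather than asserted — are that $J$ satisfies the hypotheses of Proposition~\ref{prop:newCheeger} with the \emph{same} $C^*$ and the same $|V|$ (with the pleasant byproduct $h(J)=h(K)$), and that, because $\Lcal_K$ and $\Lcal_J$ are singular, the spectral sandwich must be shown to preserve the kernel so that $\lambda_1(\cdot)$, the smallest positive eigenvalue, is monotone under $\preceq$ in the needed direction. Neither point is deep, but both are essential to make the concatenation legitimate.
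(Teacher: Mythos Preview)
Your proposal is correct and follows exactly the route the paper indicates: the paper offers no proof beyond the sentence ``The following result now follows from combining Theorem~\ref{thm:main} and Proposition~\ref{prop:newCheeger},'' and you supply precisely that combination, together with the two checks that make it legitimate (that $J$ inherits the same $|V|$, $C^*$, and indeed $h$, so Proposition~\ref{prop:newCheeger} applies to $J$; and that the PSD sandwich forces $\ker\Lcal_J=\ker\Lcal_K$, so the Rayleigh-quotient comparison descends to $\lambda_1$). There is nothing to add or subtract---this is the intended argument, spelled out.
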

        \noindent
        Thus, the Cheeger constant of the sparsified simplicial complex, $J$, is bounded below by a multiplicative factor of the first nontrivial eigenvalue of the up Laplacian for the original complex, $K$.


\section{Numerical experiments} 
\label{sec:experiments}

    In Section~\ref{sec:preservation}, we conduct numerical experiments to illustrate the inequalities bounding the spectrum of the up Laplacian of the sparsified simplicial complex, proven in Theorem \ref{thm:main}.
    In Section~\ref{subsec:spectral-clustering} we extend a well-known graph spectral clustering method to simplicial complexes. We show that the clusters obtained for sparsified simplicial complexes are similar to those of the original simplicial complex. We also present the analogous results for graph sparsification  to serve as a comparison.
    In Section~\ref{subsec:label-propagation}, we show similar results for label propagation before and after sparsification.
    The na\"{i}ve implementation of spectral clustering is quadratic in the number of simplices; while label propagation is cubic.
    While we could take advantage of sparse matrix methods (see Appendix \ref{sec:app-complexity} for details), our proposed sparsification method could  further improves these computational complexity estimates.


\subsection{Preservation of the spectrum of the up Laplacian} 
\label{sec:preservation}

    \para{Experimental set up.}
        In the setting of graph sparsification~\cite{SpielmanSrivastava2011}, we recall that if a graph $H$ is an $\varepsilon$-approximation of a graph $G$, $n$ is the number of vertices in $H$ and $G$,  then we have the following inequality,
        
        \begin{equation}
        \label{eq:graph-inequality}
        (1 - \varepsilon) x^T L_G x \,\leq \, x^T L_H x \, \leq \,(1+ \varepsilon) x^T L_G x, 
        \qquad \qquad \qquad  \forall x \in \Rspace^n. 
        \end{equation}
        Subtracting $x^T L_G x$ from all terms in this inequality, we obtain
        
        \begin{equation*}
        -\varepsilon x^T L_G x \, \leq \, x^T (L_H - L_G) x \, \leq \, \varepsilon x^T (L_G) x, 
        \qquad \qquad \qquad  \forall x \in \Rspace^n. 
        \end{equation*}
        Let $\lambda_{max}(L_G)$, $\lambda_{max}(L_H)$ and  
        $\lambda_{max}(L_H - L_G)$ be the maximum eigenvalues of $L_G$ and $L_H$ and $L_H- L_G$ respectively.
        Also, let $\lambda_{min}(L_G)$ be the minimum eigenvalue of $L_G$.  
        With some algebraic manipulations, we obtain on the right hand side,
        
        \begin{align*}
        \lambda_{max}(L_H - L_G)  = \max_{||x|| = 1} x^T (L_H - L_G) x  \leq  \varepsilon \max_{||x|| = 1} x^T (L_G) x  = \varepsilon \lambda_{max}(L_G).
        \end{align*}
        Similarly, on the left hand side, we obtain
        
        \begin{align*}
        0 =  - \varepsilon \lambda_{min}(L_G)  
        = - \varepsilon \min_{||x|| = 1} x^T L_G x 
        & = \max_{||x|| = 1} - \varepsilon x^T L_G x   \\
        &  \leq  \max_{||x|| = 1} x^T (L_H - L_G) x = \lambda_{max}(L_H - L_G).
        \end{align*}
        Together we have the inequality 
        
        \begin{equation}
        \label{eq:graph-eigenvalue-new}
        0 
        \ \leq \  \lambda_{max}(L_H - L_G) 
        \ \leq \ \varepsilon \lambda_{max}(L_G).  
        \end{equation}
        We can obtain the analogous inequality in the setting of simplicial complex sparsification. 
        Let $J$ be a sparsified version of $K$ following the setting of Theorem~\ref{thm:main}.
        Suppose for a fixed dimension $i$ (where $1 \leq i \leq \dime{K}$), $K$ and $J$ have $(i-1)$-th up Laplacians $\mathcal{L}_K := \mathcal{L}_{K,i-1}$ and $\mathcal{L}_J := \mathcal{L}_{J,i-1}$ respectively, we have,
        
        \begin{equation}
        \label{eq:sc-inequality}
        (1 - \varepsilon) x^T \mathcal{L}_K x \leq  x^T \mathcal{L}_J x \leq (1+ \varepsilon) x^T \mathcal{L}_K x, 
        \qquad \qquad \qquad  \forall x \in \Rspace^{n_{i-1}}. 
        \end{equation}
        A similar argument leads to the following inequality, 
        
        \begin{equation}
        \label{eq:sc-eigenvalue-new}
        0 \ \leq \  \lambda_{max}(\mathcal{L}_J - \mathcal{L}_K) 
        \ \leq \ \varepsilon\lambda_{max}(\mathcal{L}_K).  
        \end{equation}
        Notice that inequality~\eqref{eq:graph-eigenvalue-new} is a special case of the inequality~\eqref{eq:sc-eigenvalue-new}.
    
    \para{Preservation of the spectrum of the sparsified graph Laplacian.}
        To demonstrate how the spectrum of the graph Laplacian is preserved during graph sparsification, we set up the following experiment.
        Consider a complete graph $G$ with $n_0 = 40$ vertices and $n_1 = 780$ edges.
        We run multiple sparsification processes on this graph $G$ and study the convergence behavior based on the inequality in \eqref{eq:graph-inequality}.
        For each sparsification process, we use a sequence of sample sizes, ranging between $10$ and $2n_1$. For each sample size $q$, we set $\varepsilon =\sqrt{n_0 \log n_0/q} $ by assuming that $9C^2 = 1$ in the hypothesis of Theorem~\ref{thm:main}.
        As $q$ varies, we correspondingly obtain a sequence of varying $\varepsilon$ values.
        
        In particular, we run 25 simulations on $G$.
        For each simulation, we fix a unit vector $x$ uniformly randomly sampled from $\Sspace^{n_0}$, and perform 25 instances of experiments.
        For each instance, we apply our sparsification procedure to generate the convergence plot using the list of fixed sample sizes $q$ and their corresponding $\varepsilon$'s.
        Specifically, for each sample size, we obtain a sparse graph $H$ and compute $x^T L_{H} x$ and $\lambda_{max}(L_{H} - L_{G})$; and we observe the convergence behavior of these quantities as the sample size increases.
        
        In Figure~\ref{fig:graph-results}(a), we show the convergence behavior based on the inequality in \eqref{eq:graph-inequality}.
        For a single simulation, we compute the point-wise average of $x^T L_H x$ across the $25$ instances, and plot these values as function of the sample size $q$, which gives rise to a single convergence curve in aqua.
        Then we compute the point-wise average of the aqua curves across all simulations, producing the red curve.
        Since each simulation (for a fixed $x$) has a different upper bound curve $(1-\varepsilon)x^T L_G x$ and lower bound curve $(1+\varepsilon)x^T L_G x$ respectively (not shown here), the point-wise average of the upper and lower bound curves across all simulations is plotted in blue.
        We observe that on average, these curves respect the inequality~\eqref{eq:graph-inequality}, that is, the red curve is nested within its approximated theoretical upper and lower bounds in blue. 
        
        In Figure~\ref{fig:graph-results}(b), we illustrate the theoretical upper and lower bounds for $\lambda_{max}(L_H - L_G)$ given in inequality~\eqref{eq:graph-eigenvalue-new} as the sample size $q$ increases.
        In particular, we run a single simulation with $25$ instances, computing $\lambda_{max}(L_{H} - L_{G})$.
        Each instance gives us a convergence curve shown in aqua. We compare the point-wise average of $\lambda_{max}(L_H - L_G)$ (in red) with its (approximated) theoretical upper bound in blue and lower bound ({\it i.e.},~$0$, the x-axis).
        On average, the experimental results respect the inequality~\eqref{eq:graph-eigenvalue-new}.  
        Figure~\ref{fig:scaling}(a) illustrates how the number of edges scale with the increasing number of samples across all instances.
        
        \begin{figure}[t!]
        	\begin{center}
        		\begin{tabular}{cc}
        			\includegraphics[width=0.49\linewidth]{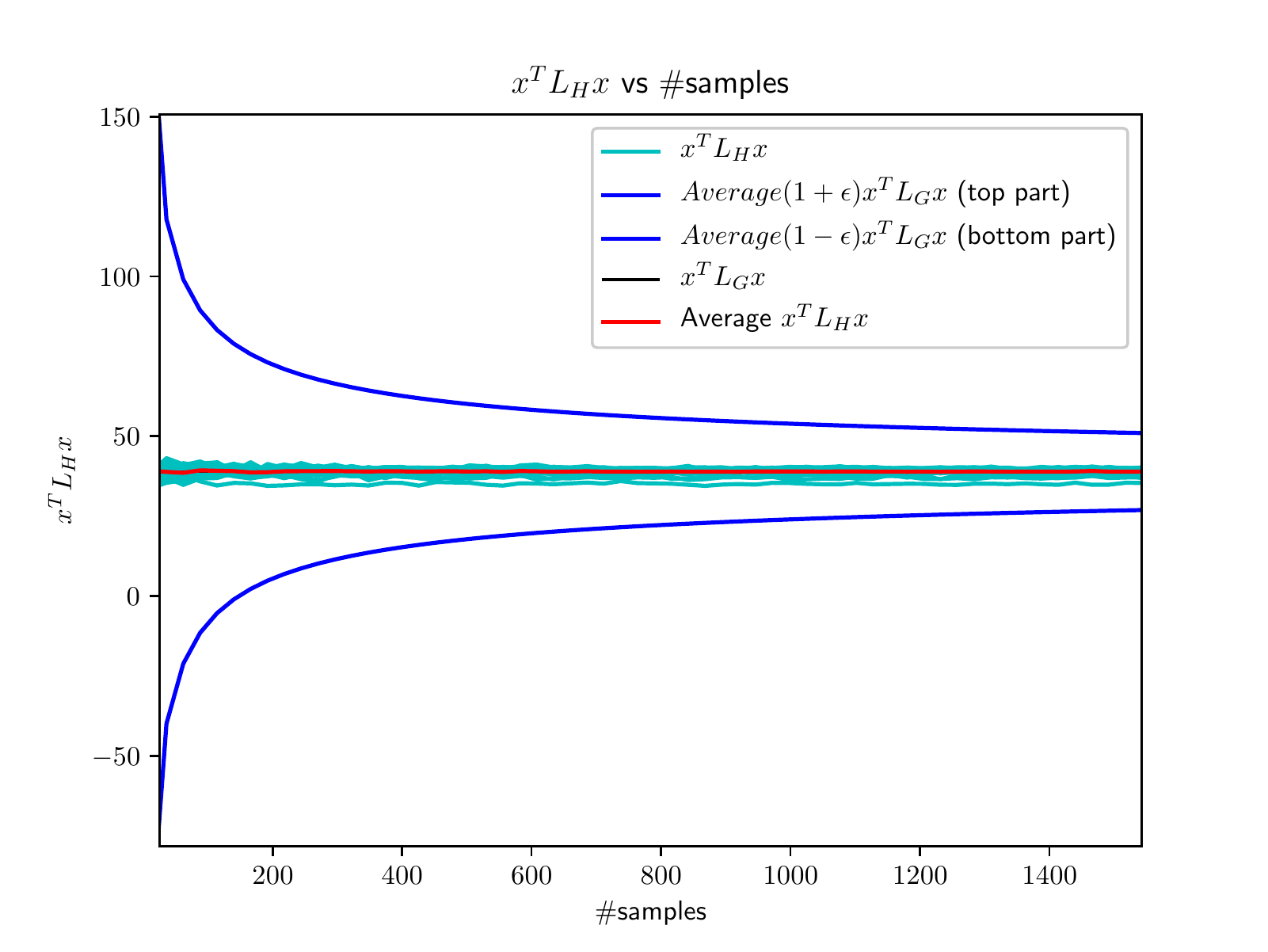} &
        			\includegraphics[width=0.49\linewidth]{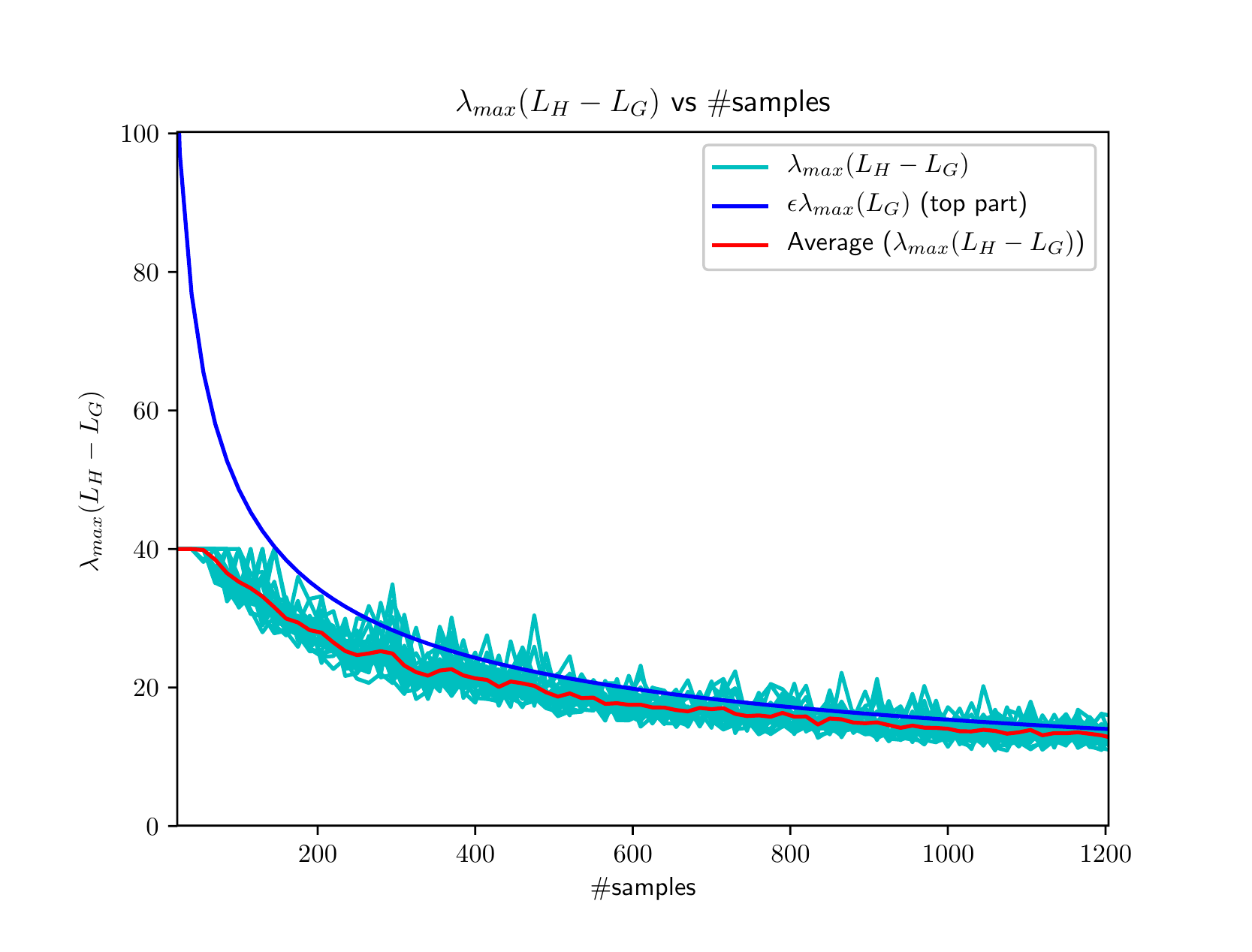} \vspace{-4mm}\\
        			{\bf (a)} & {\bf (b)}
        		\end{tabular}
        		\vspace{-2mm}
        		\caption{The results of a numerical experiment illustrating inequalities which control the spectrum of sparsified graph Laplaicans. 
        			{\bf (a)} For an ensemble of vectors, $x\in \mathbb S^{n_0}$, and sparsified graphs, $H$, we plot the terms in inequality \eqref{eq:graph-inequality}. 
        			{\bf (b)} For an ensemble of sparsified graphs, $H$, we plot the terms in the inequality \eqref{eq:graph-eigenvalue-new}. 
        		}
        		\label{fig:graph-results}
        	\end{center}
        \end{figure}
    
    \para{Preservation of the spectrum of the up Laplacian for a sparsified simplicial complex.}
        To demonstrate that the spectrum of the up Laplacian is preserved during the sparsification of a simplicial complex, we set up a similar experiment.
        We start with a $2$-dimensional simplicial complex, $K$, that contains all edges and triangles on $n_0 = 40$ vertices (with $n_1 = 780$ edges and $n_2 =  9880$ faces.) and a sequence of fixed sample sizes q.
        For each sample size $q$, we solve for $\varepsilon =\sqrt{n_1 \log n_1/q} $ assuming that $9C^2 = 1$ in the hypothesis of Theorem~\ref{thm:main}, to get the corresponding sequence of $\varepsilon$ values.
        With the simplicial complex $K$ and the sequence of sample sizes fixed, we run 25 simulations, each simulation consisting 25 instances and a fixed randomly sampled unit vector $x$ as described previously; only this time, we sparsify the faces of the simplicial complex by applying Algorithm~\ref{alg:sparsify} with $i=2$.
        In Figure~\ref{fig:sc-results}, we plot the terms in inequalities describing the spectrum for these sparsified simplicial complexes.
        
        In Figure~\ref{fig:sc-results}(a), following the same procedure as for graph sparsification, we obtain a plot that respects the inequality~\eqref{eq:sc-inequality}.
        The curves in aqua show the point-wise averages of $x^T \mathcal{L}_{J} x$ across all instances in a single simulation, whereas the red curve represents point-wise average across all instances and all simulations.
        Since the random vector $x$ is resampled for each simulation, the upper and lower bound curves are different for every simulation.
        In Figure~\ref{fig:sc-results}(a) we plot their point-wise average across all simulations as the upper and lower bound curves in blue.
        
        In Figure~\ref{fig:sc-results}(b), to illustrate inequality~\eqref{eq:sc-eigenvalue-new}, we run a single simulation with $25$ instances. Each instance gives us a sequence of $\lambda_{max}(\mathcal{L}_J - \mathcal{L}_K)$ values as function of sample size. We plot them as curves in aqua.
        We compare the point-wise averages of $\lambda_{max}(\mathcal{L}_J - \mathcal{L}_K)$ (in red) with its (approximated) theoretical upper and lower bounds in blue.
        Figure~\ref{fig:scaling}(b) shows how the number of faces scales with increasing number of samples across all instances.
        
        \begin{figure}[t!]
        	\begin{center}
        		\begin{tabular}{cc}
        			\includegraphics[width=0.49\linewidth]{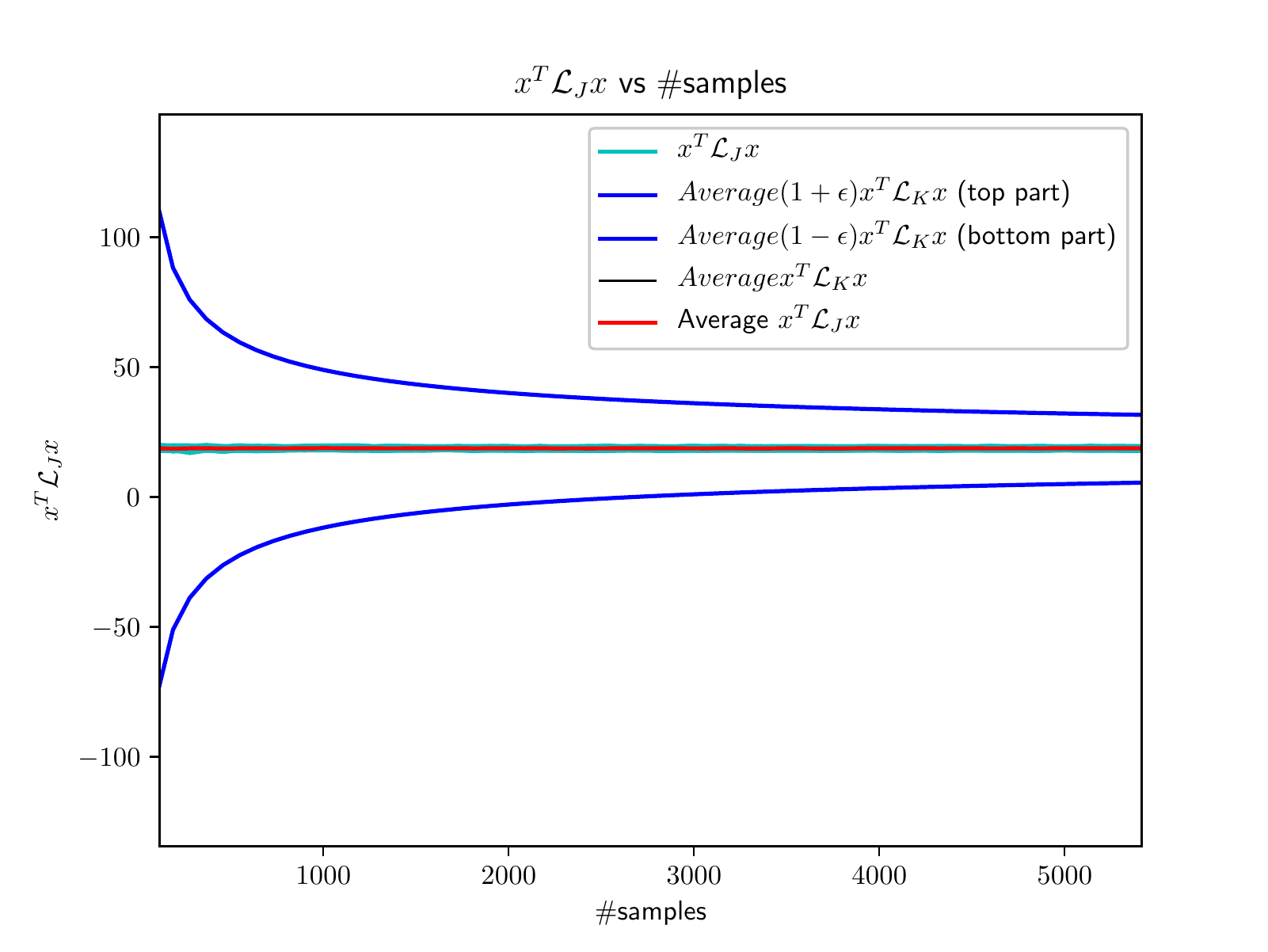} &
        			\includegraphics[width=0.49\linewidth]{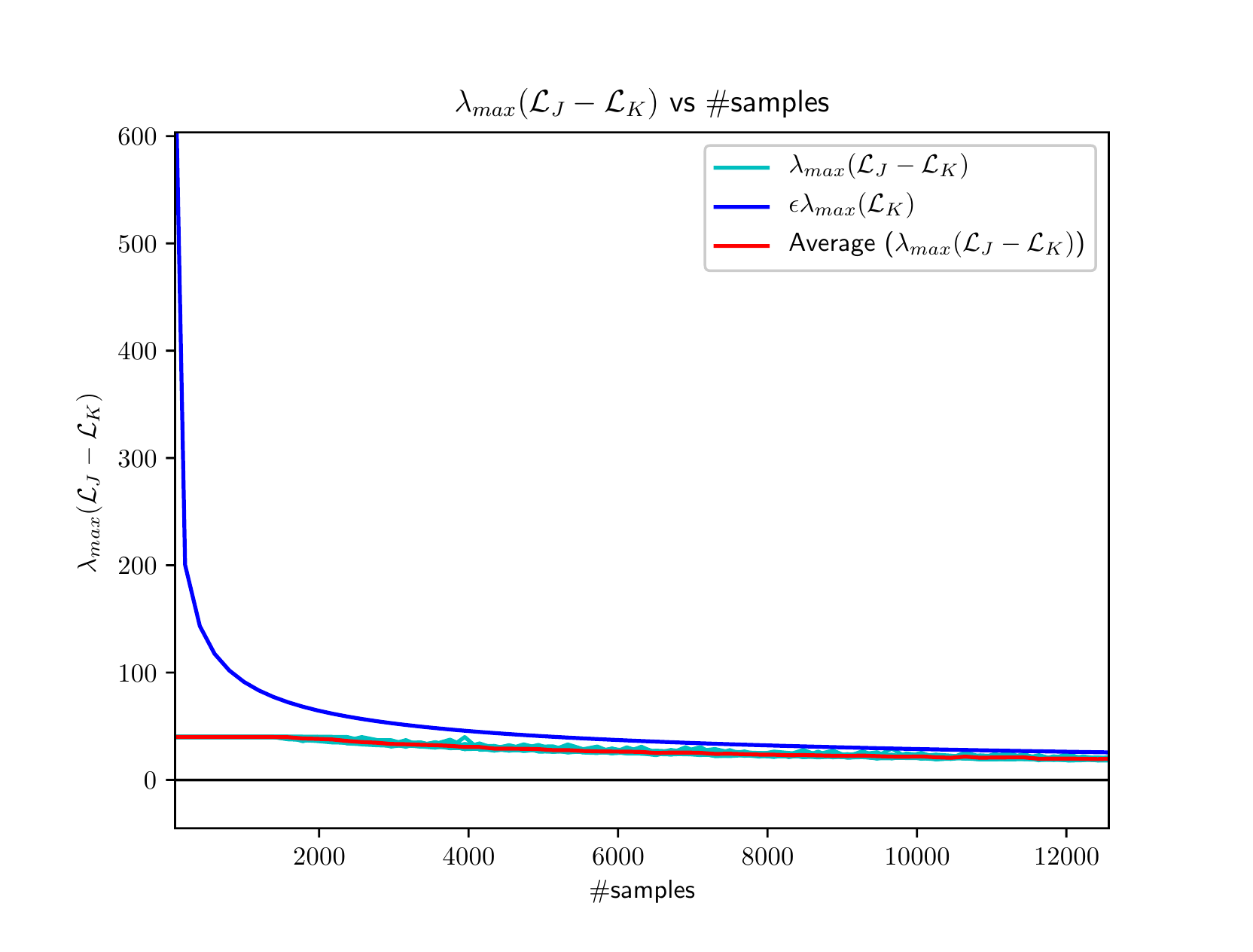} \vspace{-4mm}\\
        			{\bf (a)} & {\bf (b)}
        		\end{tabular}
        		\vspace{-2mm}
        		\caption{ The results of a numerical experiment illustrating inequalities which control the spectrum of the up Laplacian for sparsified simplicial complexes.
        			{\bf (a)} For an ensemble of vectors, $x\in \mathbb S^{n_1}$, and sparsified simplicial complexes, $J$, we plot the terms in inequality \eqref{eq:sc-inequality}. 
        			{\bf (b)} For an ensemble of sparsified simplicial complexes, $J$, we plot the terms in the inequality \eqref{eq:sc-eigenvalue-new}. }
        		\label{fig:sc-results}
        	\end{center}
        \end{figure}
        
        \begin{figure}[t!]
        	\begin{center}
        		\begin{tabular}{cc}
        			\includegraphics[width=0.49\linewidth]{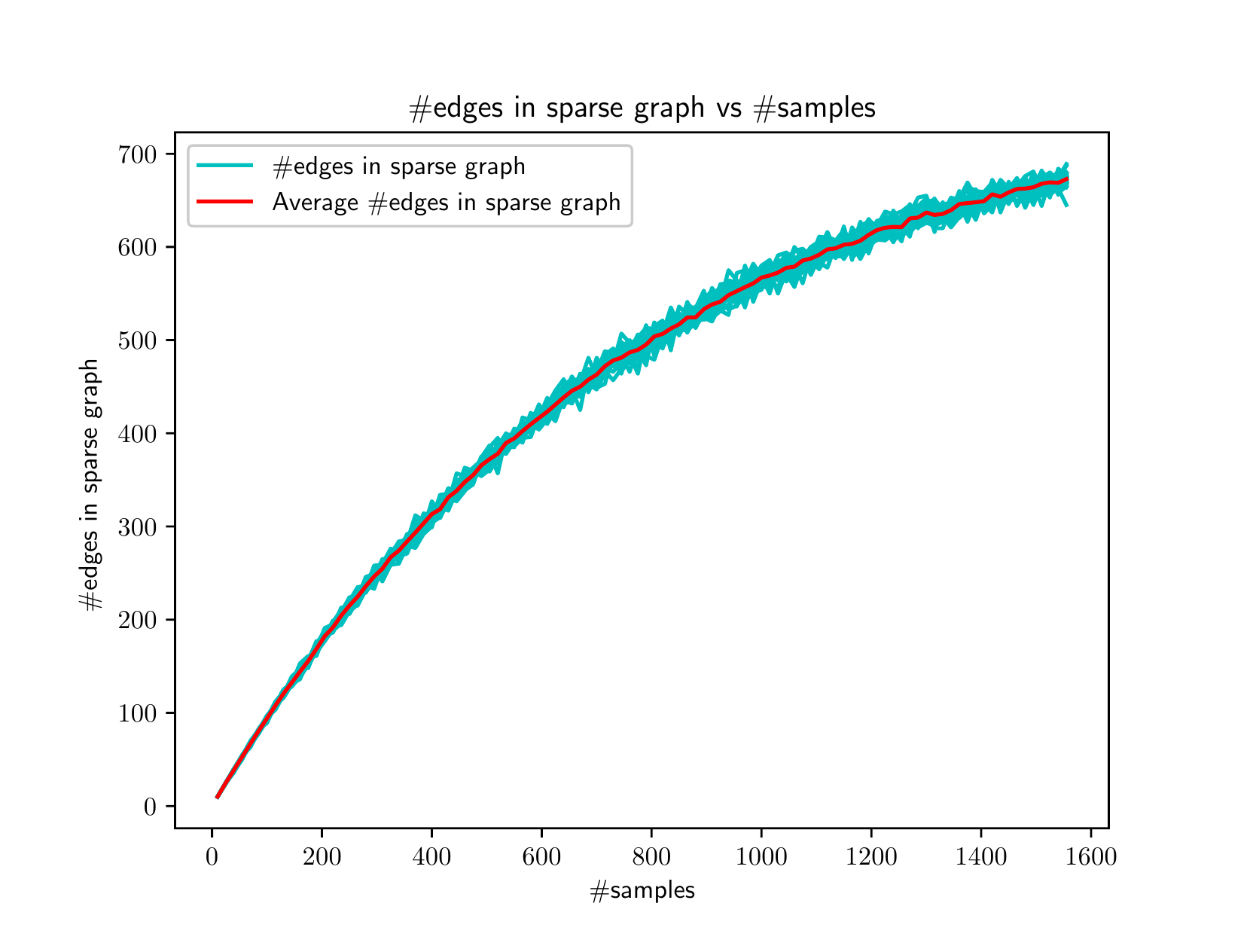} &
        			\includegraphics[width=0.49\linewidth]{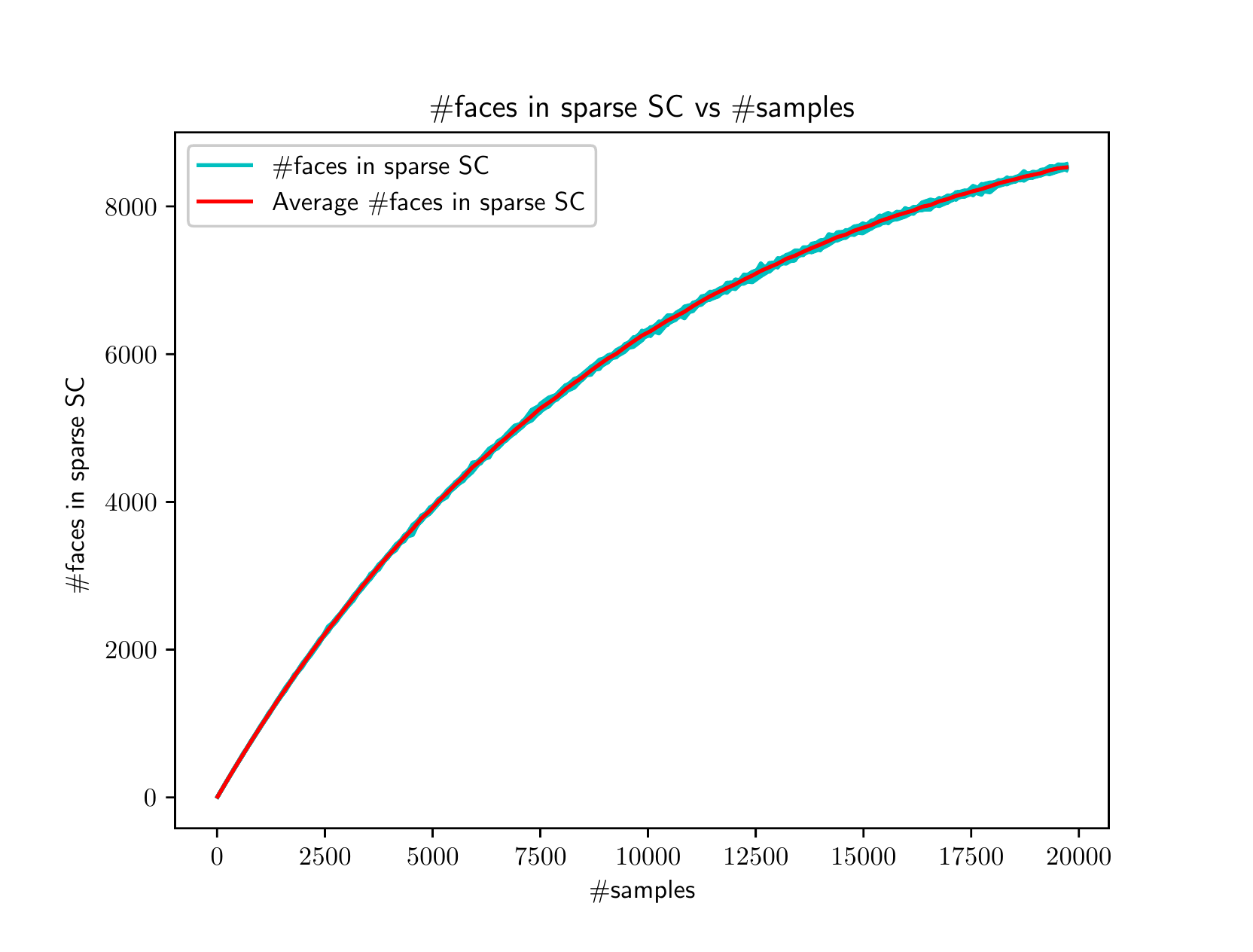} \vspace{-4mm}\\
        			{\bf (a)} & {\bf (b)}
        		\end{tabular}
        		\vspace{-2mm}
        		\caption{Figures illustrating how {\bf (a)} the number of edges in the case of graph sparsification and  {\bf (b)} the number of faces/triangles in the case of simplicial complex sparsification  vary with increasing sample size.}
        		\label{fig:scaling}
        	\end{center}
        \end{figure}
    

\subsection{Spectral clustering} 
\label{subsec:spectral-clustering}

    Spectral clustering can be considered as a class of algorithms with many variations. Here, we apply spectral clustering to simplicial complexes before and after sparsification.
    We demonstrate via numerical experiments, that preserving the structure of the up Laplacian via sparsification also preserves the results of spectral clustering on simplicial complexes.
    
    \para{Datasets.}
        We consider a graph that contains two complete subgraphs with $20$ vertices (and $190$ edges) each, which are connected by $64 = 8 \times 8$ edges spanning across the two subgraphs. We refer to this graph, $G$, as the \emph{dumbbell graph}; it has $n_0 = 40$ vertices and $n_1 = 444$ edges.
        All edge weights are set to be $1$. To compute the sparsified graph, the number of samples, $q$, is set to be $0.5 n_1$.
        
        Similarly, we consider a simplicial complex that contains two complete sub-complexes with $10$ vertices, $45$ edges and $120$ triangles each.
        The two sub-complexes are connected by $16$ cross edges and $48$ cross triangles so that the simplicial complex is made up of $n_0 = 20$ vertices, $n_1 = 106$ edges and $n_2 = 288$ triangles.
        We refer to this simplicial complex, $K$, as the \emph{dumbbell complex}.
        The weights on all edges and triangles are set to be $1$.
        To compute the sparsified simplicial complex, the number of samples, $q$, is set to be $0.75 n_2$.
    
    \para{Spectral clustering algorithm for graphs.}
        We use the Ng-Jordan-Weiss algorithm~\cite{NgJordanWeiss2001} detailed below to perform spectral clustering of graphs.
        Let $n_0$ be the number of vertices in a graph.
        Recall the \emph{affinity matrix} $A\in \Rspace^{n_0 \times n_0}$ is a matrix where $A_{ij}$ $(\geq 0)$ captures the affinity (i.e.~measure of similarity) between vertex $i$ and vertex $j$.
        In our setting, $A_{ij}$ corresponds to the weight of edge $e_{ij}$ in the diagonal edge weight matrix $W_1$.
        The spectral clustering algorithm in~\cite{NgJordanWeiss2001} can be summarized as follows:
        \begin{enumerate} 
        	\item Compute the diagonal matrix $\Delta \in \Rspace^{n_0 \times n_0}$  with diagonal elements $\Delta_{ii}$ being the sum of $A$'s $i$-th row, that is, $\Delta_{ii} = \sum_j{A_{ij}}$.
        	\item Construct the matrix  $M = \Delta^{-1/2}  A \Delta^{-1/2}$. 
        	\item Find $u_1, u_2, \cdots, u_k$, the eigenvectors of $M$ corresponding to the $k$ largest eigenvalues (chosen to be orthogonal to each other in the case of repeated eigenvalues), and form the matrix $X = [u_1 u_2 \cdots u_k] \in \Rspace^{n_0 \times k}$ by stacking the eigenvectors in columns. 
        	\item Form the matrix $Y$ from $X$ by re-normalizing each of $X$'s rows to have unit length, that is, $Y_{ij} = X_{ij}/\left( \sum_j X^2_{ij}\right)^{1/2}$. 
        	\item Treating each row of $Y$ as a point in $\Rspace^k$, cluster them into $k$ clusters via the $k$-means algorithm. 
        	\item Finally, assign the original vertex $v_i$ to cluster $j$ if and only if row $i$ of the matrix $Y$ is assigned to cluster $j$.
        \end{enumerate}
        The graph Laplacian can be written as $L = \Delta - A$.
        Furthermore $M = I - L_N$, where $L_N = \Delta^{-1/2} L \Delta^{-1/2}$ is referred to as the normalized graph Laplacian.
        In the case of a binary graph (where edge weights are either $0$ or $1$), the affinity matrix $A$ equals the vertex-vertex adjacency matrix; and $\Delta$ is the degree matrix with diagonal elements $\Delta_{ii}$ being the number of edges incident on vertex $v_i$.
        
        To demonstrate the utility of the sparsification, we illustrate the spectral clustering results before and after graph sparsification in Figure~\ref{fig:clustering-graphs-sc} (a)-(b).
        Since graph sparsification preserves the spectral properties of graph Laplacian, we expect it to also preserve (to some extent) the results of spectral methods, such as spectral clustering.
        
        \begin{figure}[!ht]
        	\vspace{-2mm}
        	\begin{center}
        		\begin{tabular}{cc}
        			\includegraphics[width=0.4\linewidth]{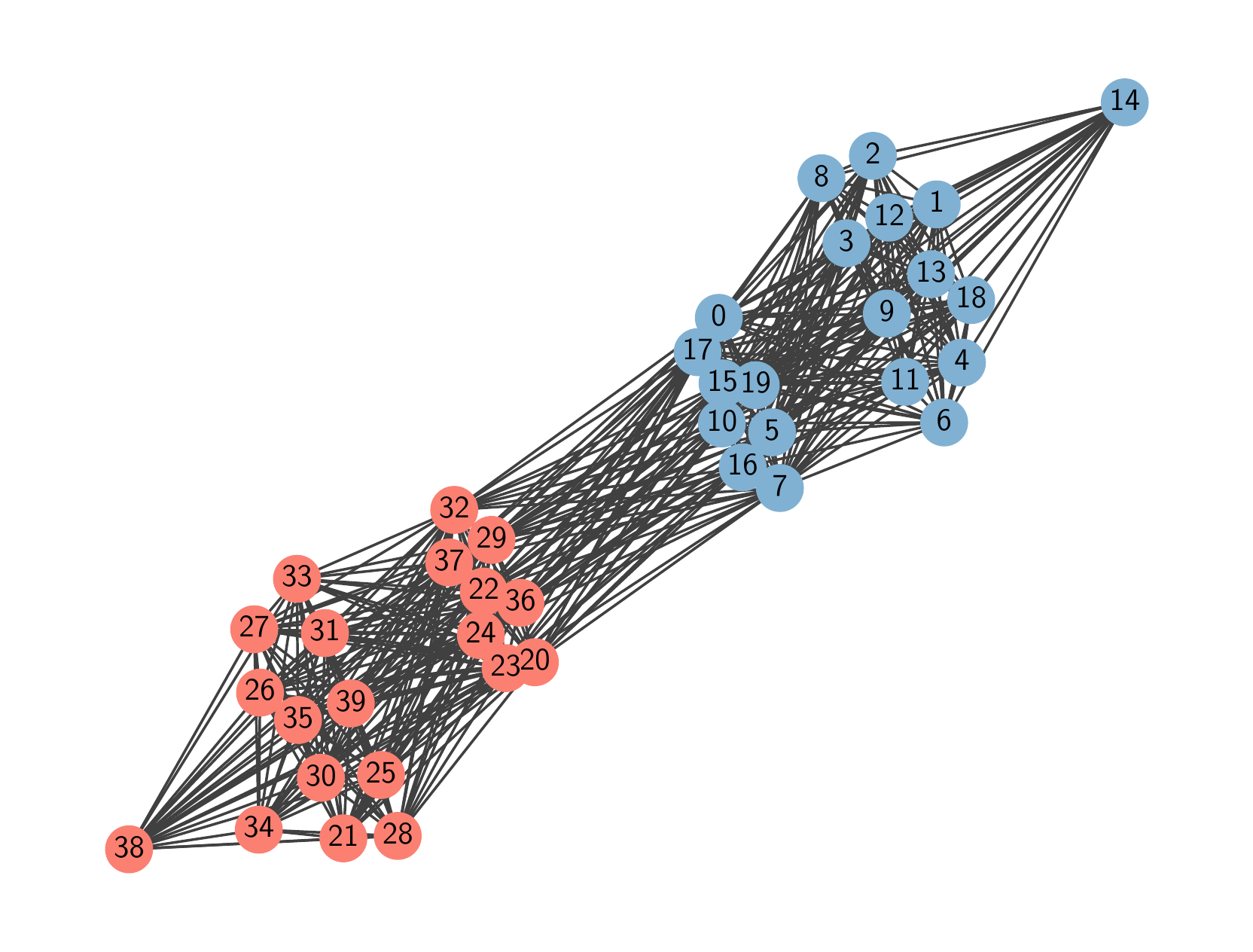} &
        			\includegraphics[width=0.4\linewidth]{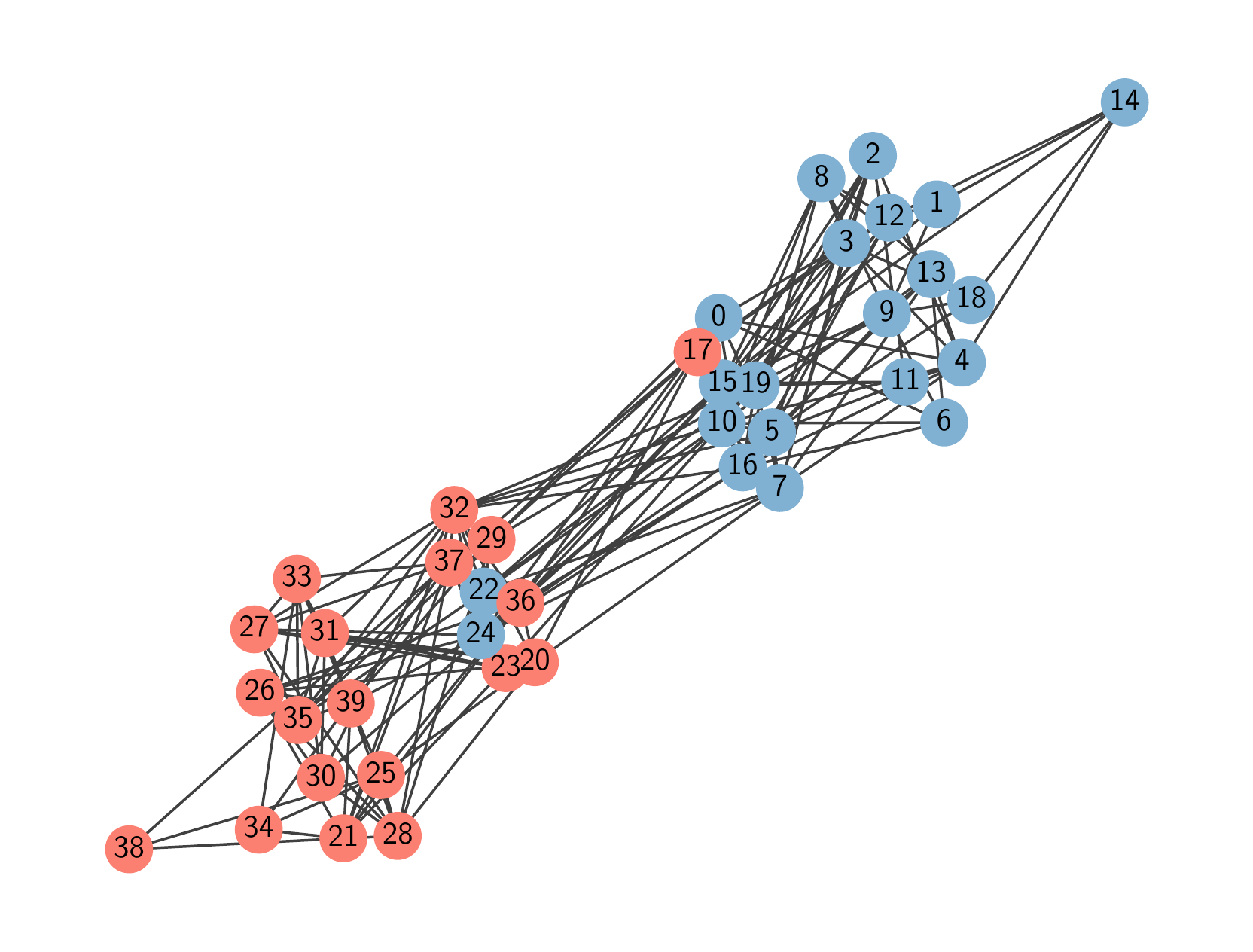}  \vspace{-4mm}\\
        			{\bf (a)} & {\bf (b)} \\
        			\includegraphics[width=0.4\linewidth]{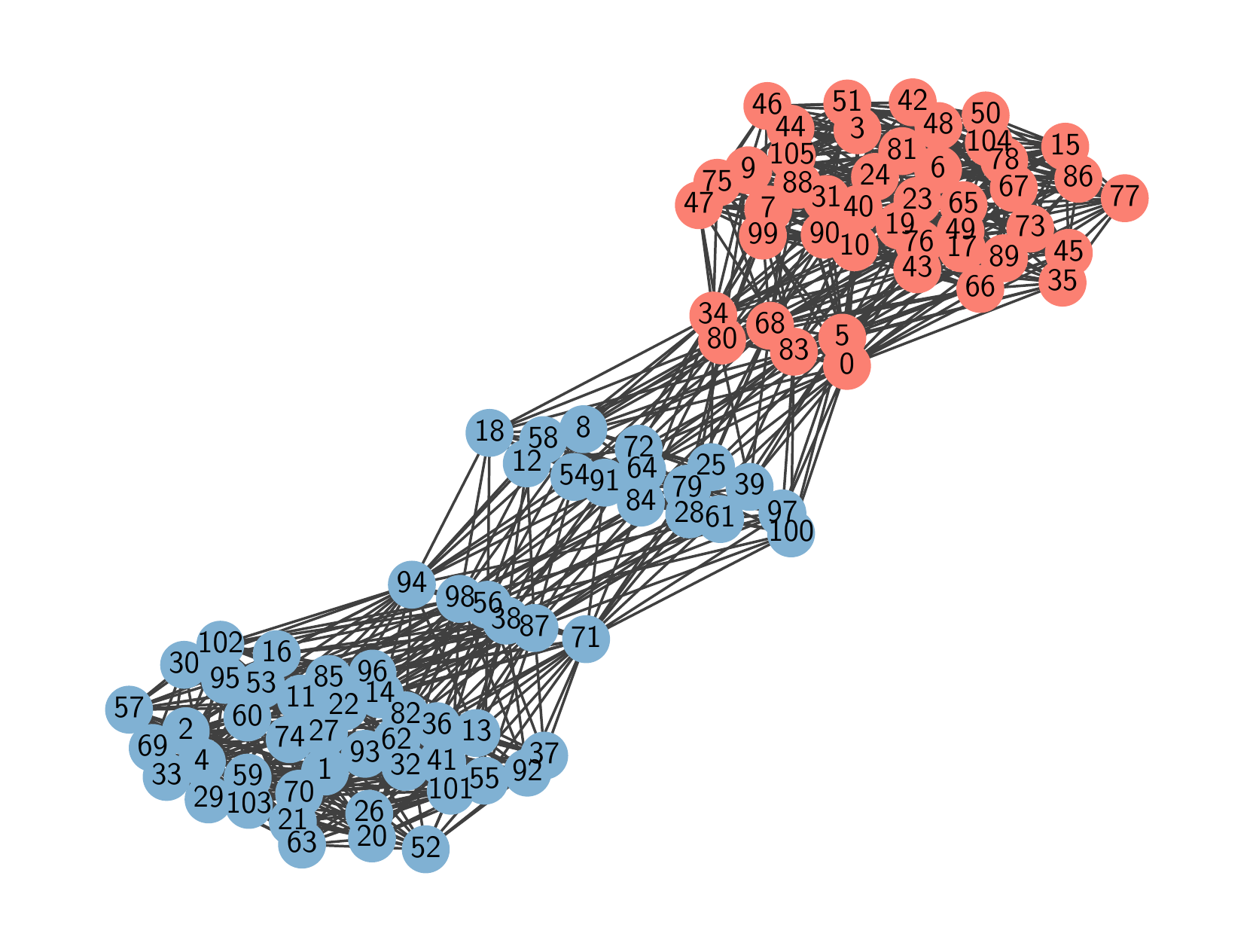} &
        			\includegraphics[width=0.4\linewidth]{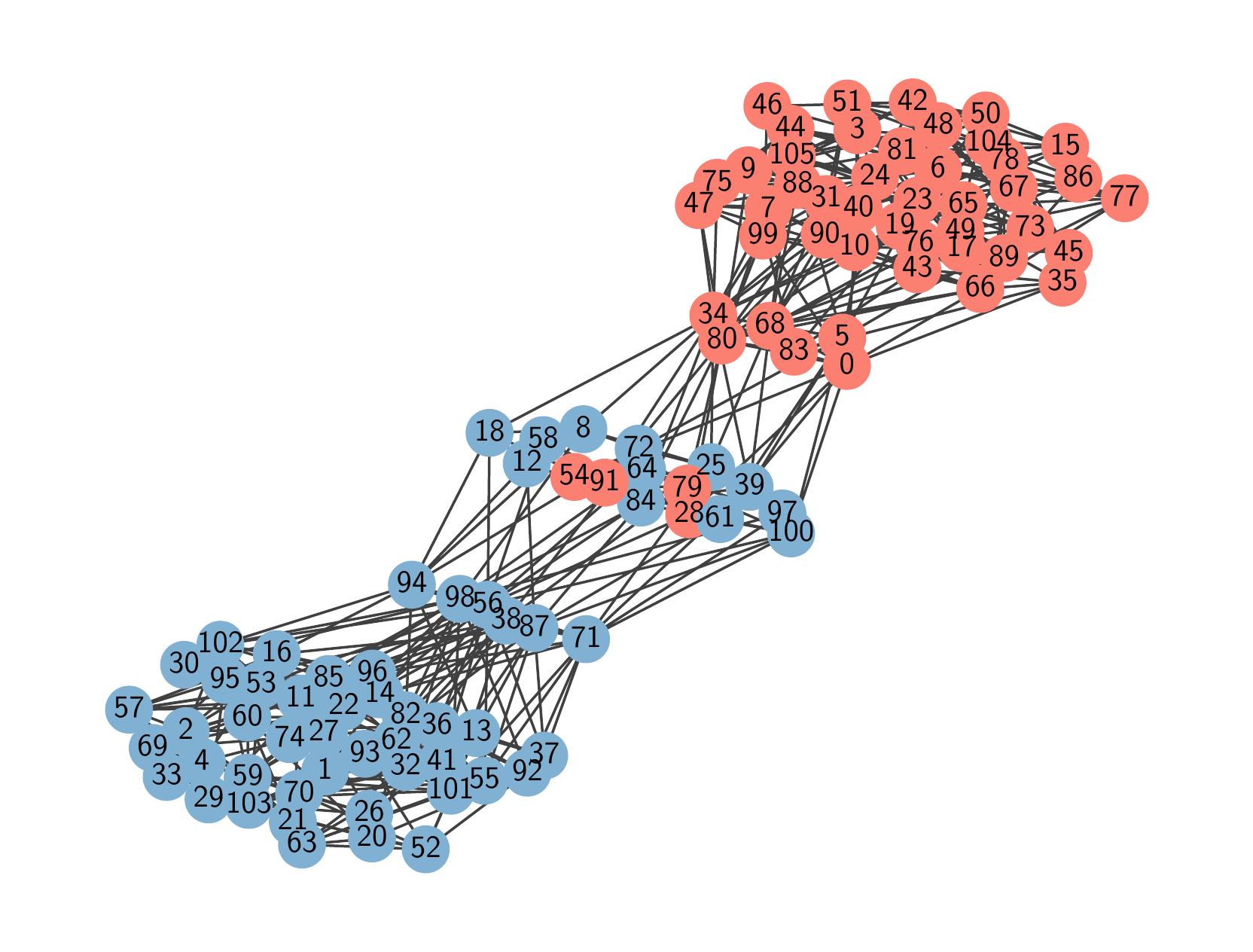} \vspace{-4mm}\\
        			{\bf (c)} & {\bf (d)}
        		\end{tabular}
        		\vspace{-2mm}
        		\caption{{\bf (a)-(b)}: Spectral clustering of graphs before {\bf (a)} and after {\bf (b)}  sparsification. {\bf (c)-(d)}: Spectral clustering of simplicial complexes into two clusters, before {\bf (c)} and after {\bf (d)} sparsification. We observe that the clusters are very similar. See Section \ref{subsec:spectral-clustering} for details. }
        		\label{fig:clustering-graphs-sc}
        	\end{center}
        \end{figure}
    
    \para{Spectral clustering algorithm for simplicial complexes.}
        We seek to extend the Ng-Jordan-Weiss algorithm~\cite{NgJordanWeiss2001} to simplicial complexes, which, as far as we are aware, has not yet been studied.
        We seek the simplest generalization by replacing the vertex-vertex affinity matrix with an edge-edge affinity matrix $A$, where two edges are considered to be adjacent if they are faces of the same triangle.
        This definition is a straightforward extension of the adjacency among vertices in graphs, however it does not account for the orientation of edges or triangles.
        
        Formally, let $n_1$ be the number of edges.
        Let $W_2$ be the diagonal weight matrix for triangles.
        We define the edge-edge \emph{affinity matrix} $A\in \Rspace^{n_1 \times n_1}$, where
        
        \[ A_{i,j} = \begin{cases}
        w_f & \textrm{if edges $e_i$ and $e_j$ are adjacent to triangle $f$ with weight $w_f$ in $W_2$}\\ 
        0 & \textrm{otherwise} 
        \end{cases} . \]
        We define $\Delta \in \Rspace^{n_1 \times n_1}$ to be the diagonal matrix with element $\Delta_{i,i}$ being the sum of $A$'s $i$-th row.
        With $A$ and $\Delta$ defined this way, we can apply the Ng-Jordan-Weiss algorithm to cluster the edges of the simplicial complex $K$.
        
        This is  equivalent to applying spectral clustering to the dual graph of $K$.
        A \emph{dual graph} $G$ of a given simplicial complex $K$ is created as follows: each edge in $K$ becomes a vertex in the dual graph $G$, and there is an edge between two vertices in $G$ if their corresponding edges in $K$ share the same triangle.
        We then apply spectral clustering to the dual graph $G$ as usual and obtain the resulting clustering of vertices in $G$ (which correspond to the clustering of edges in $K$).
        To better illustrate our edge clustering results, we visualize the resulting clusters based upon the dual graph.
        The results are plotted in Figure~\ref{fig:clustering-graphs-sc} (c)-(d) for two clusters and Figure~\ref{fig:clustering-sc-3} for three clusters.
        Applying the spectral algorithm with these new definitions of $A$ and $\Delta$ results in clusters that agree reasonably well before and after sparsification.
        
        \begin{figure}[!ht]
        	\vspace{-2mm}
        	\begin{center}
        		\begin{tabular}{cc}
        			\includegraphics[width=0.4\linewidth]{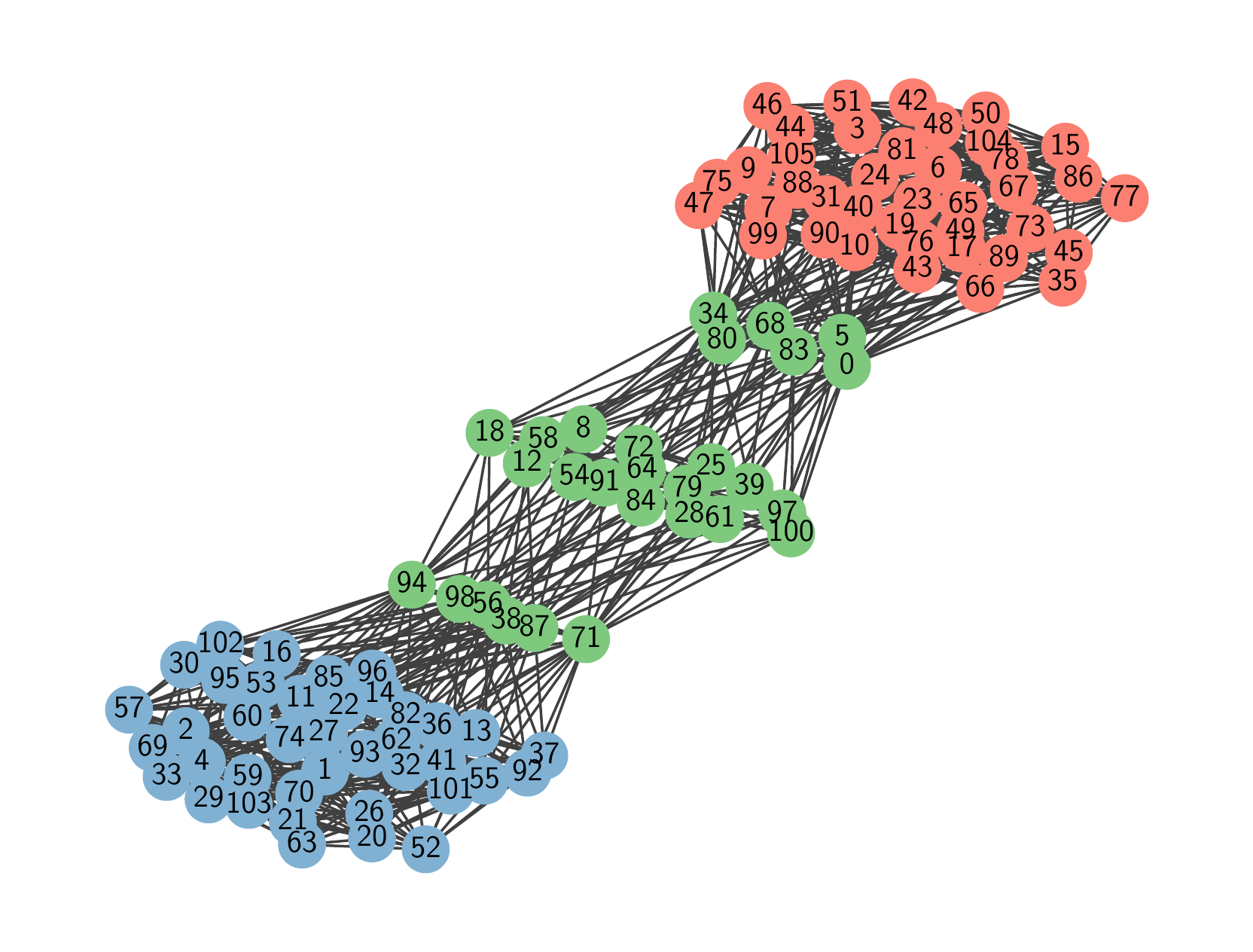} &
        			\includegraphics[width=0.4\linewidth]{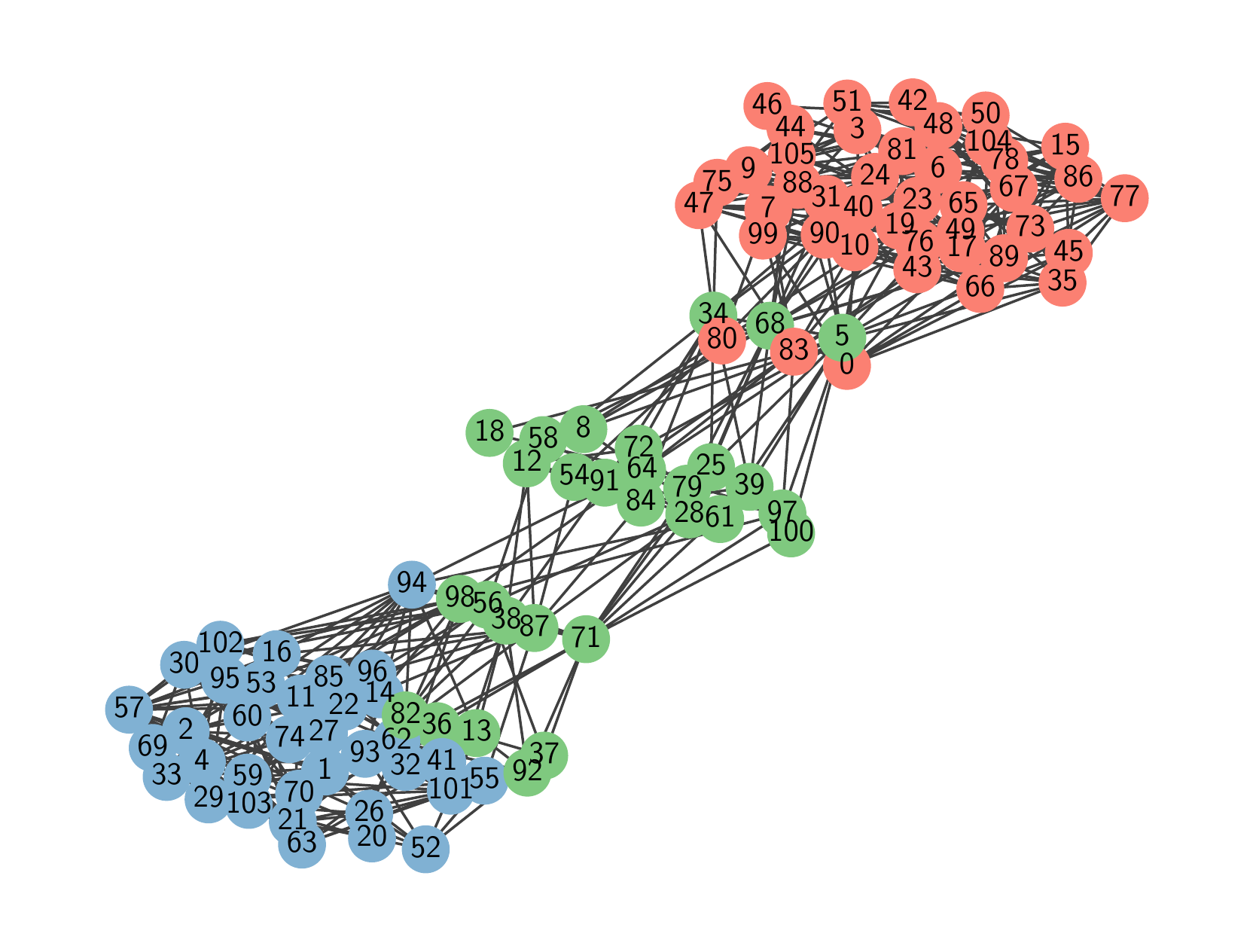} \vspace{-4mm}\\
        			{\bf (c)} & {\bf (d)}
        		\end{tabular}
        		\vspace{-2mm}
        		\caption{Spectral clustering of simplicial complexes into three clusters, before {\bf (a)} and after {\bf (b)} sparsification. See Section~\ref{subsec:spectral-clustering} for details.}
        		\label{fig:clustering-sc-3}
        	\end{center}
        \end{figure}
        
        The affinity matrix, $A$, does not take into consideration the orientation of the edges, so the above clustering algorithm does not directly rely on the up Laplacian.
        One can verify that the dimension $1$ up Laplacian can be written as 
        $\Lcal_{K,1} = \Delta/2 - A^*$, 
        where $\Delta$ is the diagonal matrix defined previously
        and the \emph{oriented edge-edge affinity matrix}, 
        $A^* \in \Rspace^{{n_1} \times {n_1}}$, is given by
        
        \[ A^*_{i, j} = 
        \begin{cases}
        -w_f & \textrm{edges $e_{i}$ and $e_{j}$ are both faces of the same triangle $f$ and both agree or } \\
        & \textrm{disagree with the orientation of their shared triangle} \\
        w_f & \textrm{if either $e_{i}$ or $e_{j}$ (but not both) agree with the orientation of $f$} \\
        0 & \textrm{if $e_{i}$ and $e_{j}$ are not adjacent}
        \end{cases}. \]
        It follows that $A = |A^*|$ where the absolute value operation is applied element-wise.
        The relation between $\Delta - A$ and the up Laplacian, $\Lcal_{K,1} $, we used for sparsification, remains unclear.


\subsection{Label propagation}
\label{subsec:label-propagation}
    
    A good example of spectral methods in learning arises from extending label propagation algorithms on graphs to simplicial complexes, in particular, the work by Mukherjee and Steenbergen~\cite{MukherjeeSteenbergen2016}.
    Specifically, they adapt the label propagation algorithm to higher dimensional walks on oriented edges, and give visual examples of applying label propagation with the $1$-dimensional up Laplacian $\Lcal_1^{up}$, down Laplacian $\Lcal_1^{down}$, and Laplacian $\Lcal_1.$
    We envision label propagation to be generalized to random walks on even higher-dimensional simplices, such as triangles.
    A direct application of our work is to sparsify the top-dimensional simplices (e.g. triangles in a $2$-dimensional simplicial complex) and examine how label propagation behaves on these top-dimensional simplices of the sparsified representation.
    
    Similar to the setting of Section~\ref{subsec:spectral-clustering}, we apply and generalize a simple version of label propagation algorithms~\cite{ZhuGhahramani2002} to the setting of both graphs and simplicial complexes.
    In particular, as illustrated in Figure~\ref{fig:sc-lp}, we show via the dual graph representation that the results obtained from sparsified simplicial complexes are similar to those of the original simplicial complex.
    We now describe the algorithmic details.
    
    \begin{figure}[t!]
    	\begin{center}
    		\begin{tabular}{cc}
    			\includegraphics[width=0.49\linewidth]{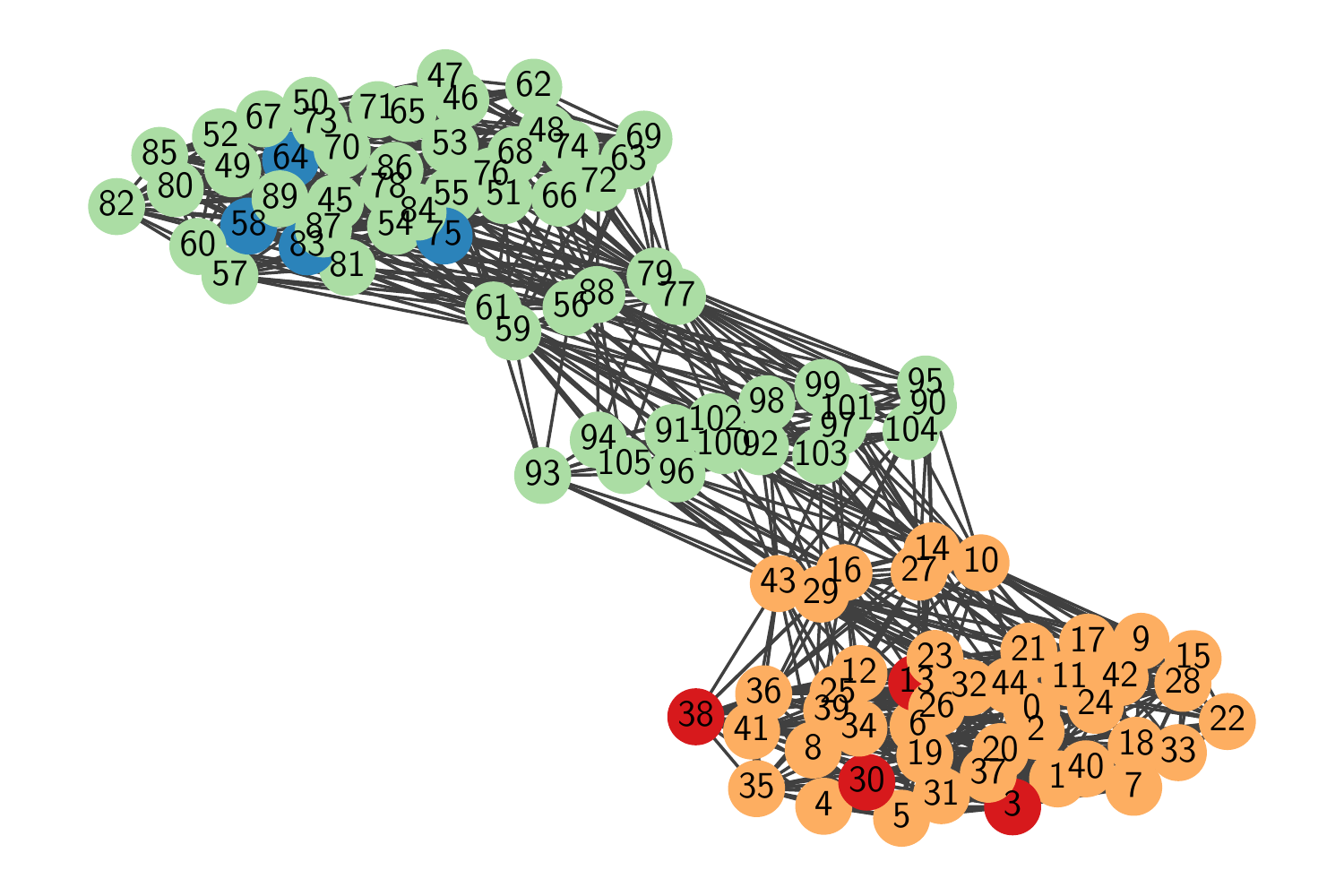} &
    			\includegraphics[width=0.49\linewidth]{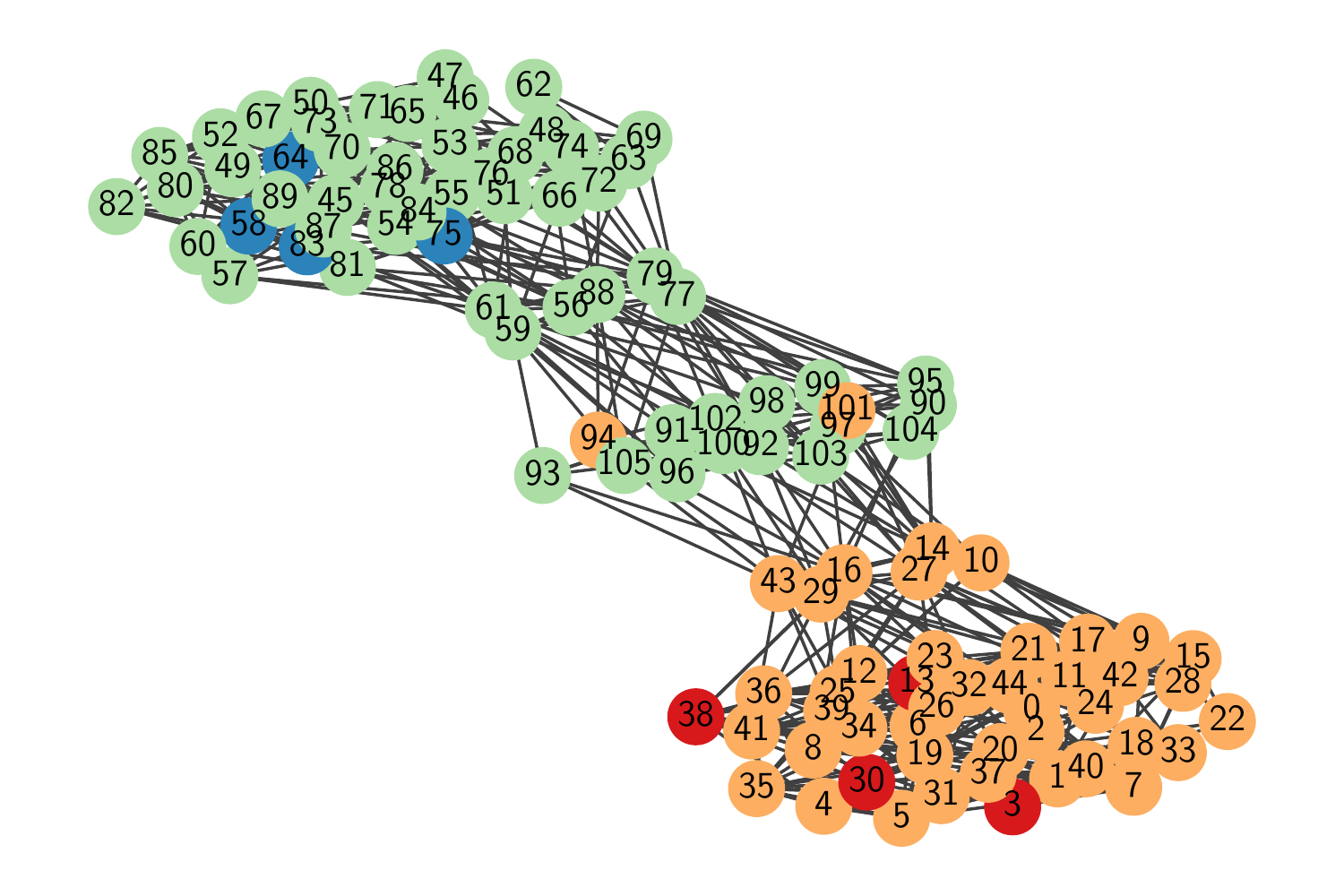} \vspace{-4mm}\\
    			{\bf (a)} & {\bf (b)}
    		\end{tabular}
    		\vspace{-2mm}
    		\caption{The results of label propagation on simplicial complexes before {\bf (a)} and after {\bf (b)} sparsification. The red and blue colored vertices correspond to fixed edge labels while the green and orange colored vertices  correspond to propagated edge labels. Blue and green share the same label while red and orange share the same label. 
    			See Section~\ref{subsec:label-propagation} for details.}
    		\label{fig:sc-lp}
    	\end{center}
    \end{figure}
    
    \para{Label propagation on graphs.}
        We implement a simple version of the iterative label propagation algorithm~\cite{ZhuGhahramani2002} based on the notion of stochastic matrix (i.e.~random walk matrix) $P = \Delta^{-1}A$, where $A$ is the affinity matrix and $\Delta$ is the diagonal matrix with diagonal elements $\Delta_{ii} = \sum_{j}a_{ij}$ (as defined in Section~\ref{subsec:spectral-clustering}).
        
        The matrix $P$ represents the probability of label transition.
        Given $P$ and an initial label vector $\mathbf{y}$, we iteratively multiply the label vector $\mathbf{y}$ by $P$. If the graph is \emph{label-connected} (i.e.~we can always reach a labeled vertex from any unlabeled one), then $P^t$ converges to a stationary distribution, that is,  $P^t \mathbf{x} = \mathbf{x}$ for a large enough $t$.
        
        Suppose there are two label classes $\{+1, -1\}$.
        Without loss of generality, assume that first $l$ of the $n$ vertices are assigned labels initially, represented as a length-$l$ vector $\mathbf{y}_l$. Given a graph $G(V, E)$ and labels $\mathbf{y}_l$, the algorithm is given as:
        \begin{enumerate}
        	\item Compute $A$, $\Delta$, and $P = \Delta^{-1}A$.
        	\item Initialize $\mathbf{y}^{(0)} = (\mathbf{y}_l, \mathbf{0})$, $t = 0$.
        	\item Repeat until convergence:
        
        	$$\mathbf{y}^{(t+1)} = P \mathbf{y}^{(t)},$$
        	$$\mathbf{y}_l^{(t+1)} = \mathbf{y}_l^{(t)}.$$
        	\item Return $\sgn(\mathbf{y^{(t)}}).$
        \end{enumerate}
        Consider $P$ to be divided into blocks as follows:
        
        $$P = \begin{pmatrix}
        P_{ll} & P_{lu} \\
        P_{ul} & P_{uu}, 
        \end{pmatrix}$$
        where $l$ and $u$ index the labeled and unlabeled vertices with the number of vertices $n_0 = l + u$. 
        Let $\mathbf{y} = (\mathbf{y}_l, \mathbf{y}_u)$ be the labels at convergence, then $\mathbf{y}_u$ is given by :
        
        $$\mathbf{y}_u = (I - P_{uu})^{-1}P_{ul}\mathbf{y_l}$$
        As long as our graph is connected, it is also label-connected and $(I - P_{uu})$ is non-singular. So we can directly compute the labels at convergence without going through the iterative process described above.
        
        \begin{figure}[ht!]
        	\begin{center}
        		\begin{tabular}{cc}
        			\includegraphics[width=0.49\linewidth]{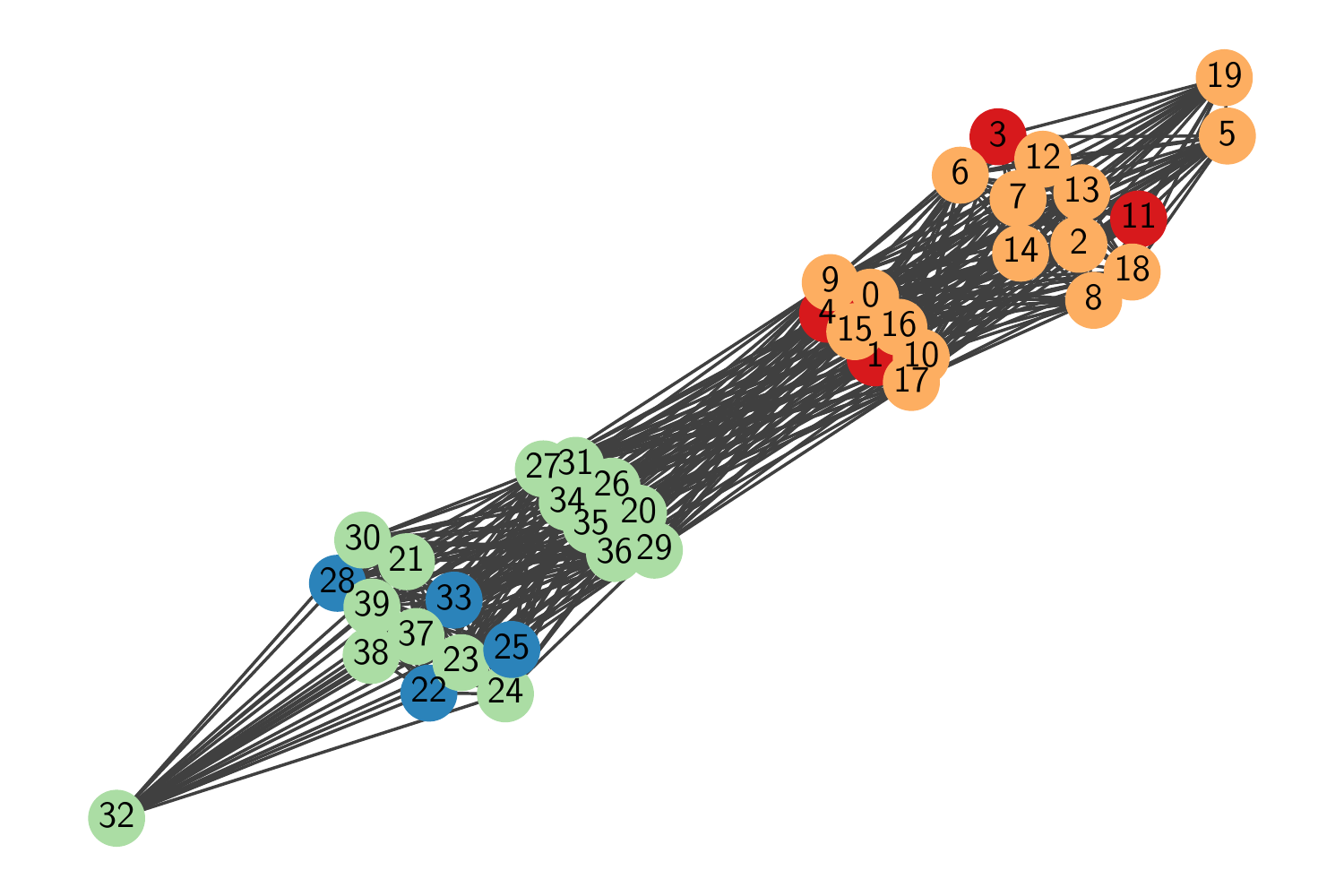} &
        			\includegraphics[width=0.49\linewidth]{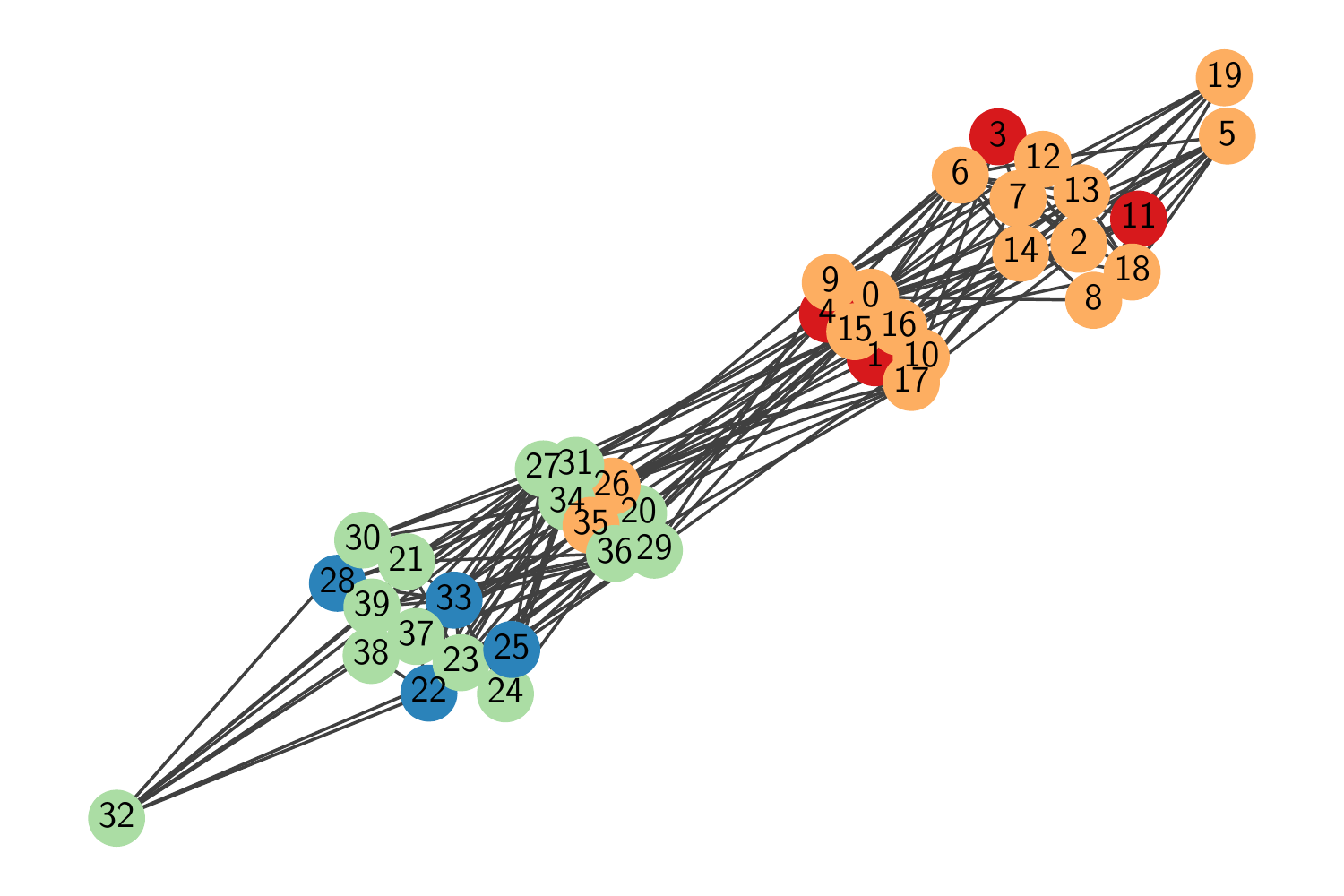} \vspace{-4mm}\\
        			{\bf (a)} & {\bf (b)}
        		\end{tabular}
        		\vspace{-2mm}
        		\caption{The results of label propagation on the dumbbell graph before {\bf (a)} and after {\bf (b)} sparsification. The red and blue color represent the initial opposite vertex labels while the green and orange color correspond to the final propagated vertex labels. Blue and green share the same label while red and orange share the same label.}
        		\label{fig:graph-lp}
        	\end{center}
        \end{figure}
        As illustrated in Figure~\ref{fig:graph-lp}, we apply label propagation algorithm to the dumbbell graph dataset to demonstrate that preserving the structure of graph Laplacian via sparsification also preserves the results of label propagation on graphs.
    
    \para{Label propagation on simplicial complexes.}
        To apply label propagation to our dumpbell complex example, we could extend the label propagation algorithm of~\cite{ZhuGhahramani2002} to simplicial complexes, again, by replacing the vertex-vertex affinity matrix with edge-edge affinity matrix $A$. 
        As a consequence, the new diagonal matrix $\Delta$ and the stochastic matrix $P$ capture relations among edges instead of vertices. 
        Without considering the orientation of edges or triangles, the algorithm can be considered as applying label propagation to the dual graph of the simplicial complex. 
        
        In addition to the example showing in Figure~\ref{fig:sc-lp}, we give a few more instances of the results of label propagation on the dumbbell complex in Figure~\ref{fig:sc-lp-more} with different initial labels. 
        
        \begin{figure}[t!]
        	\begin{center}
        		\begin{tabular}{cc}
        			\includegraphics[width=0.49\linewidth]{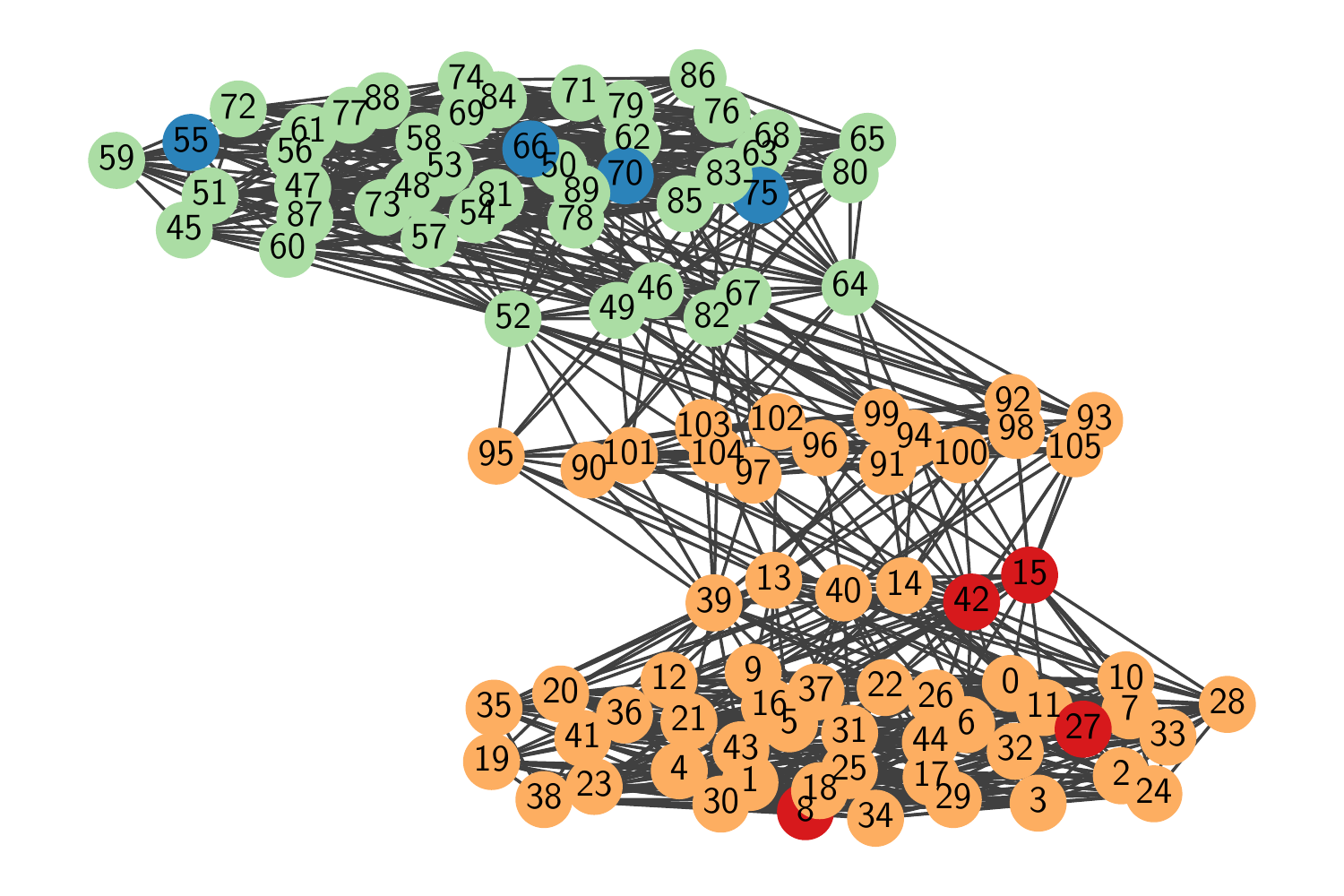} &
        			\includegraphics[width=0.49\linewidth]{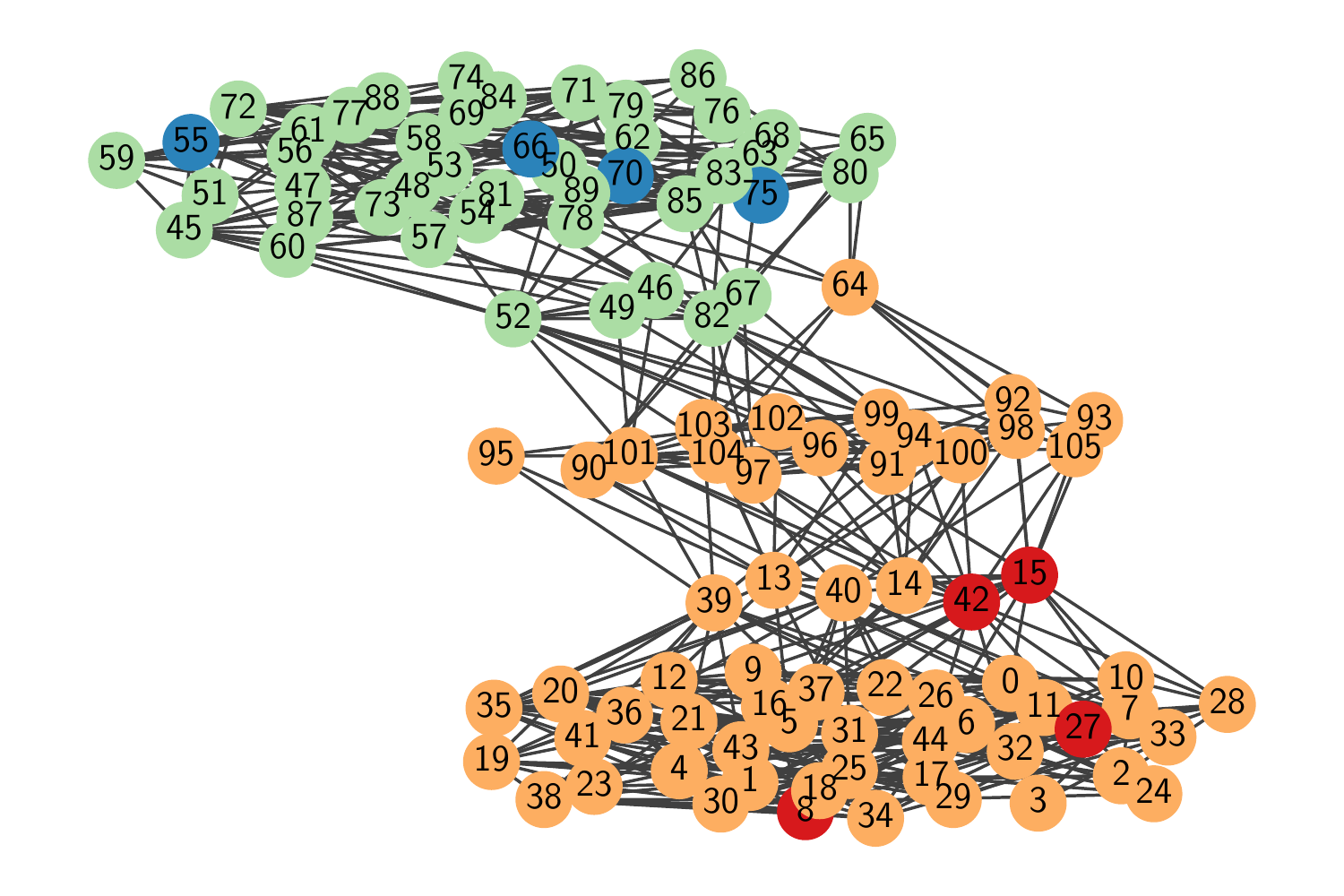} 
        			\vspace{-4mm}\\
        			{\bf (a)} & {\bf (b)}\\
        			\includegraphics[width=0.49\linewidth]{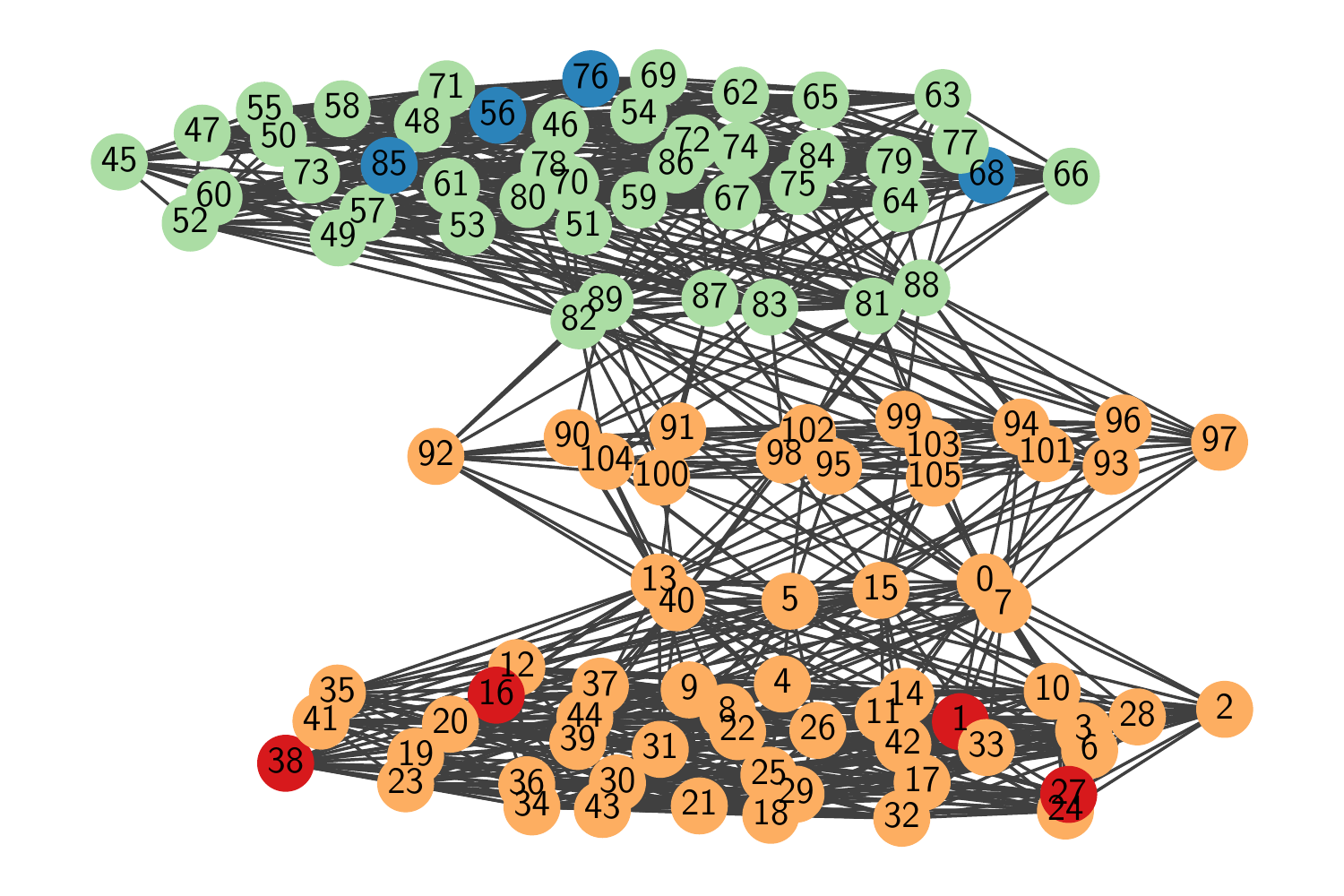} &
        			\includegraphics[width=0.49\linewidth]{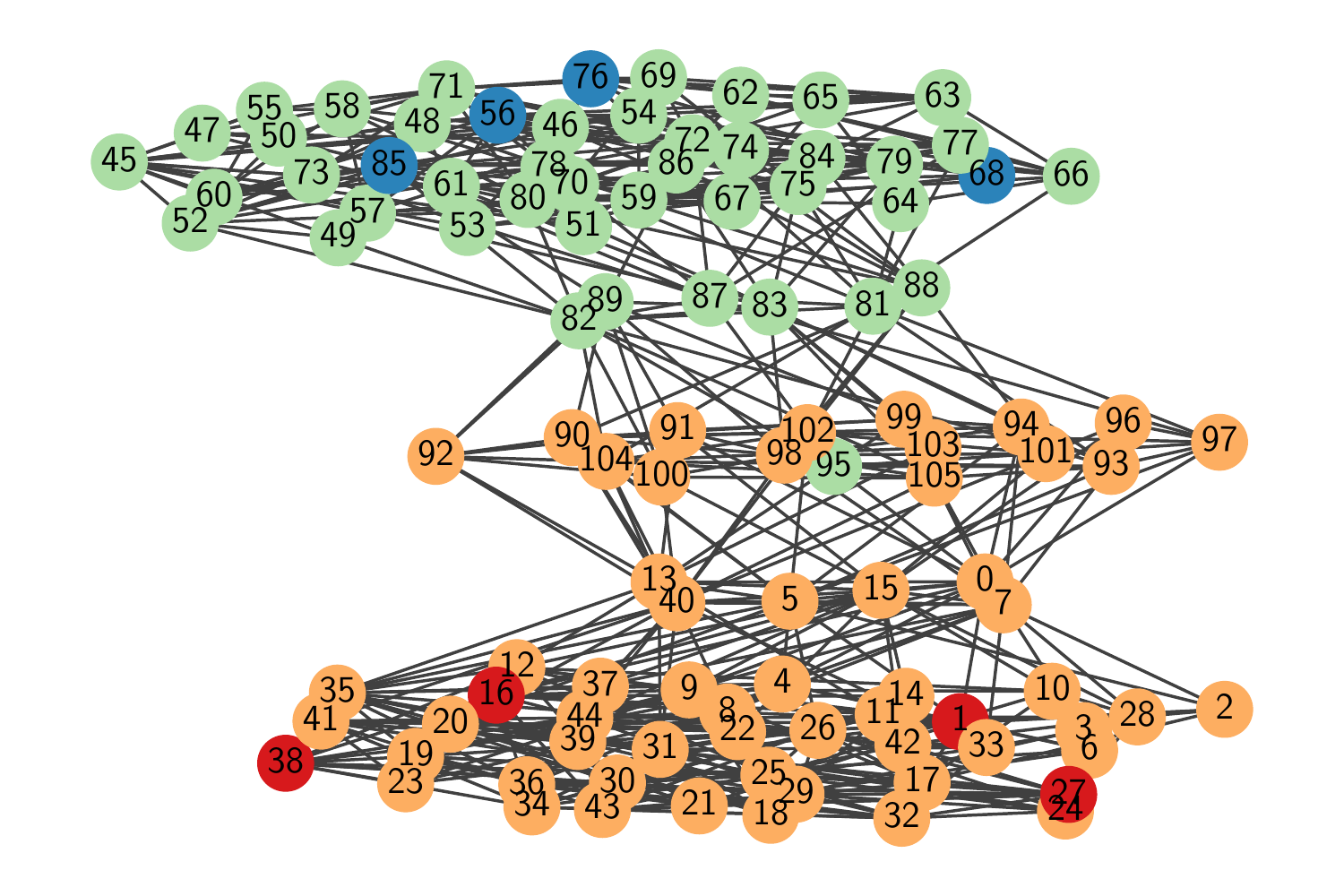} 
        			\vspace{-4mm}\\
        			{\bf (c)} & {\bf (d)}\\
        		\end{tabular}
        		\vspace{-2mm}
        		\caption{More instances of label propagation on the dumbbell complex before {\bf (a), (c)} and after {\bf (b), (d)} sparsification.}
        		\label{fig:sc-lp-more}
        	\end{center}
        \end{figure}

	
\section{Discussion}
\label{sec:discussion}

    We presented an algorithm for the simplification of simplicial complexes that preserves spectral properties of the up Laplacian.
    Our work is strongly motivated by the study of an emerging class of learning algorithms based on simplicial complexes and, in particular, those spectral algorithms that operate with higher-order Laplacians.
    We would like to understand the benefits and incurred error when such learning algorithms are applied to sketches of the data.
    Several on-going and future directions are described below.
    
    \para{Physical meaning of generalized effective resistance.}
        We believe the generalization of effective resistance to simplicial complexes, introduced in Section \ref{sec:algorithm}, may find other applications in analyzing simplicial complexes.
        Though the generalization is algebraically straightforward, there are many natural and interesting questions about its interpretation and properties.
        For example, does it have an interpretation in terms of a random process, such as an effective commute time as in the case of a graph (see, e.g., ~\cite{GhoshBoydSaberi2008})?
        Is it related to minimum spanning objects in the simplicial complex?  Does it play a further role in spectral clustering of simplicial complexes?
    
    \para{Multilevel and Hodge sparsification.} 
        We are also interested in performing multilevel sparsification of simplicial complexes; for example, we would like to sparsify triangles and edges simultaneously while preserving spectral properties of the dimension-$0$ and dimension-$1$ up Laplacians.
        This is  challenging if we would like to simultaneously maintain structures of simplicial complexes; it may be possible if we could relax our structural constraints to work with hyper-graphs instead.
        In addition, multilevel sparsification is also related to preserving the spectral properties of the (Hodge) Laplacian.
        Finally, we are also interested in deriving formal connections between homological sparsification and spectral sparsification of simplicial complexes.
	
	
\section*{Acknowledgements}
	This work was partially supported by NSF DMS-1461138, NSF IIS-1513616, and the University of Utah Seed Grant 10041533.
	We would like to thank Todd H. Reeb for contributing to early discussions.


\newpage

	
\newpage
\appendix
\section{Complexity}
\label{sec:app-complexity}

    Suppose we are given a weighted, oriented simplicial complex $K$ and a fixed dimension $i$ where $1 \leq i \leq \dim(K)$. We will denote the number of $i$-simplices of $K$ as $n_i$.


\subsection{Na\"{i}ve implementation}

    \para{Sparsification.}
        To sparsify $K$ at dimension $i$, our algorithm needs to compute the incidence matrix $D_{i-1}$, the up Laplacian $\mathcal{L}_{K, i-1}$, the Moore-Penrose inverse of the up Laplacian $(\mathcal{L}_{K, i-1})^{+}$ and the generalized effective resistance matrix $R_i$.
        
        Computing the incidence matrix, $D_{i-1}$, requires a constant number of operations per $i$-simplex, $O(n_{i})$.
        Computing $\mathcal{L}_{K, i-1}$, requires two matrix-matrix multiplications, one of which involves the diagonal matrix $W_{i}$, $O(n_{i}n_{i-1})$.
        The up Laplacian computed is an $n_{i-1} \times n_{i-1}$ symmetric positive semidefinite matrix.
        In the most na\"{i}ve implementation, we compute the Moore-Penrose pseudo-inverse by using a QR decomposition routine which requires $O(n_{i-1}^3)$ number of operations.
        Once we have the inverse, computing $R_i$ again takes $O(n_{i}n_{i-1})$. Since, $n_{i} \leq n_{i-1}^2$ for any simplicial complex, the overall complexity scales as that of computing the inverse, that is,  $O(n_{i-1}^3)$.
    
    \para{Spectral Clustering.}
        In spectral clustering, our objective is to cluster $(i-1)$-simplices of $K$ into $k$ clusters.
        To do this, our algorithm first computes the eigenvectors corresponding to the $k$ largest eigenvalues of $\mathcal{L}_{K, i-1}$ and then applies $k$-means clustering to the point set of size $n_{i-1}$ in $\mathbb{R}^k$ formed by these $k$ eigenvectors.
        A na\"{i}ve algorithm to compute $k$ eigenvectors of an $n_{i-1} \times n_{i-1}$ matrix requires $O(n_{i-1}^2k)$ operations.
        The $k$-means algorithm (Lloyd's algorithm) to cluster $n_{i-1}$ points in $\mathbb{R}^k$ into $k$ clusters runs in $O(n_{i-1}k^2 j)$, where $j$ is the number of iterations required for convergence. We may assume, in general, that $k << n_{i-1}$.
        Therefore, overall complexity of spectral clustering is $O(n_{i-1}^2k)$.
    
    \para{Label Propagation.}
        In the label propagation problem, we are given discrete labels for a small subset of $(i-1)$-simplices of $K$ and the objective is to learn the labels on remaining unlabeled $(i-1)$-simplices.
        Our label propagation algorithm requires computing the transition probability matrix $P$ by normalizing the adjacency matrix of $(i-1)$-simplices.
        Then, it computes the inverse of the sub-matrix $P$ corresponding to the set of unlabeled edges.
        Assuming the number of labeled edges is small, computing inverse requires $O(n_{i-1}^3)$ operations.
        After that, the algorithm only requires two matrix-vector multiplications $O(n_{i-1}^2)$.
        So the overall complexity of label propagation is $O(n_{i-1}^3)$.


\subsection{Sparse matrix implementation}

    \para{Sparsification.}
        Note that unless we are dealing with a complete simplicial complex, the up Laplacian $\mathcal{L}_{K, i-1}$ is fairly sparse, and as such, we can use algorithms specifically designed to handle sparse matrices.
        In our implementations, we used SciPy's sparse linear algebra module.
        This can significantly reduce the number of operations required to perform all the matrix-matrix multiplications.
        However, computing the Moore-Penrose pseudo-inverse $(\mathcal{L}_{K, i-1})^+$ still requires $O(n_{i-1}^3)$ operations.
        The SciPy implementation for pseudo-inverse uses the QR decomposition.
    
    \para{Spectral Clustering.}
        The sparse eigenvalue solver of SciPy uses ARPACK's Implicitly Restarted Arnoldi Method (IRAM).
        The rate of execution (in flops) for an IRAM iteration is asymptotic to the rate of execution of matrix-vector multiplication routine of BLAS.
        That is, computing $k$ eigenvectors requires $O(\nnz \cdot k \cdot s)$  where $\nnz$ is the number of non-zero entries in $\mathcal{L}_{K, i-1}$ and $s$ is the number of iterations required for convergence.
        Once the eigenvectors are computed, the $k$-means (Lloyd's) algorithm runs in $O(n_{i-1} \cdot k^2 \cdot t)$ where $t$ is the number of iterations required for $k$-means algorithm to converge.

    \para{Label Propagation.}
        Our label propagation algorithm requires solving the following linear system 
        $(I - P_{uu})y_u = P_{ul}y_l$,  
        where $P$ is the normalized adjacency matrix (transition probability matrix) of $(i-1)$-simplices of $K$, $y_l$ is the vector of known labels, and $P_{uu}$ is the sub-matrix of $P$ corresponding to unlabeled $(i-1)$-simplices.
        As long as the simplices are label-connected (there is a sequence of $i$-simplices connecting every unlabeled $(i-1)$-simplex to a labeled $(i-1)$-simplex), $(I - P_{uu})$ is symmetric positive definite.
        Using the sparse implementation of conjugate gradient, the system can be solved in $O(\nnz \cdot n_{i-1})$ where $\nnz$ is the number of non-zero entries in $P$ which is the same as the number of non-zero entries in the adjacency matrix of $(i-1)$-simplices of $K$.
	
	
\end{document}